\newcommand{\longversion}[1]{#1}
\newcommand{\shortversion}[1]{}
\newcommand{\Nat}{\mathbb{N}}
\newcommand{\hy}{\hbox{-}\nobreak\hskip0pt}
\def\hy{\hbox{-}\nobreak\hskip0pt}
\newcommand{\SB}{\{\,} \newcommand{\SM}{\;{:}\;} \newcommand{\SE}{\,\}}
\newcommand{\Card}[1]{|#1|}
\newcommand{\CCC}{\mathcal{C}}
\newcommand{\NP}{\text{\normalfont NP}}
\newcommand{\FPT}{\text{\normalfont FPT}}
\newcommand{\W}[1][xxxx]{\text{\normalfont W}[#1]}
\newcommand{\mtext}[1]{\text{\normalfont\itshape #1}} 
\newcommand{\var}{\mtext{var}}
\newcommand{\ol}[1]{\overline{#1}}
\newcommand{\proj}[2]{#1|_{#2}}
\newcommand{\shapes}[1]{\mathsf{shapes}(#1)}
\newcommand{\generators}[2]{\mathsf{generators}_{#1}(#2)}
\newcommand{\rgenerators}[2]{\mathsf{restricedgen}_{#1}(#2)}
\newcommand{\rshapes}[1]{\mathsf{rshapes}(#1)}
\newcommand{\projset}[2]{\mathsf{proj}(#1,#2)}
\newcommand{\tuplestop}[1]{\mathscr{X}_{#1}}
\newcommand{\tuplesbot}[1]{\overline{\mathscr{X}_{#1}}}
\newcommand{\infunctions}[1]{\mathsf{infunctions}(#1)}
\newcommand{\outfunctions}[1]{\mathsf{outfunctions}(#1)}
\newcommand{\shapeout}[1]{\mathit{out}_{#1}}
\newcommand{\shapein}[1]{\mathit{in}_{#1}}
\let\doendproof\endproof
\renewcommand\endproof{~\hfill$\qed$\doendproof}
\begin{document}

\title{Model Counting for Formulas of\\Bounded Clique\hy Width\thanks{This research was supported by the
    ERC (COMPLEX REASON, 239962).}}

\author{Friedrich Slivovsky \and Stefan Szeider}

\institute{
 Institute of Information Systems,
 Vienna University of Technology, Vienna, Austria
\email{fs@kr.tuwien.ac.at,stefan@szeider.net}
}

\maketitle

\begin{abstract}\begin{sloppypar}
  We show that \#SAT is polynomial\hy time tractable for classes of
  CNF formulas whose incidence graphs have bounded symmetric clique\hy width
  (or bounded clique\hy width, or bounded rank\hy width). This result
  strictly generalizes polynomial\hy time tractability results for
  classes of formulas with signed incidence graphs of bounded
  clique\hy width and classes of formulas with incidence graphs of
  bounded modular treewidth, which were the most general results of
  this kind known so far.
\end{sloppypar}
\end{abstract}

\section{Introduction}
Propositional model counting (\#SAT) is the problem of computing the number of
satisfying truth assignments for a given CNF
formula
. It is a well-studied problem with
applications in Artificial Intelligence, such as probabilistic
inference~\cite{BacchusDalmaoPitassi03,SangBeameKautz05}. It is also a
notoriously hard problem: \#SAT is \#P-complete in general~\cite{Valiant79b}
and remains \#P-hard even for monotone 2CNF formulas and Horn 2CNF
formulas~\cite{Roth96}. It is \NP-hard to approximate the number of satisfying
truth assignments of a formula with $n$ variables to within $2^{n^{1 -
    \varepsilon}}$ for any $\varepsilon > 0$. As in the exact case, this
hardness result even holds for monotone 2CNF formulas and Horn 2CNF
formulas~\cite{Roth96}.  While these syntactic restrictions do not make the
problem easier, \#SAT becomes tractable under certain \emph{structural}
restrictions~\cite{FischerMakowskyRavve06,GanianHlinenyObdrzalek13,GaspersSzeider13,NishimuraRagdeSzeider07,OrdyniakPaulusmaSzeider13,PaulusmaSlivovskySzeider13,SamerSzeider10,Szeider04b}. Structural
restriction are obtained by bounding parameters of (hyper)graphs associated
with formulas. We extend this line of research and study \#SAT for classes of
formulas whose incidence graphs (that is, the bipartite graph whose vertex
classes consist of variables and clauses, with variables adjacent to clauses
they occur in) have bounded \emph{symmetric clique\hy
  width}~\cite{Courcelle04}. Symmetric clique\hy width is a parameter that is
closely related to clique\hy width, rank\hy width, and Boolean\hy width: a class of graphs has
bounded symmetric clique\hy width \textit{iff} it has bounded clique\hy width
\textit{iff} it has bounded rank\hy width \textit{iff} it has bounded
Boolean\hy width. For a graph class $\CCC$, let
$\#\textnormal{SAT}(\CCC)$ be the restriction of \#SAT to instances $F$ with
incidence graph $I(F) \in \CCC$. We prove:
\begin{theorem}\label{thmmain}\sloppypar
  $\#\textnormal{SAT}(\CCC)$ is polynomial\hy time tractable for any graph
  class $\CCC$ of bounded symmetric clique\hy width.
\end{theorem}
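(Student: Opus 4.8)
The plan is to compute a suitable decomposition of the incidence graph and then run bottom\hy up dynamic programming over it, where the states record how a partial truth assignment to the variables on one side of a cut interacts with the clauses and variables on the other side. First I would exploit the fact that bounded symmetric clique\hy width is equivalent to bounded rank\hy width, so that for each fixed bound $k$ on the width of $\CCC$ one can compute, in polynomial time, a branch decomposition of $I(F)$ in which every edge induces a bipartition of the vertex set whose cut\hy rank over $\mathrm{GF}(2)$ is $O(k)$; an approximate rank decomposition suffices, and such decompositions are computable by the algorithms of Oum and Seymour. This yields a subcubic tree $T$ whose leaves are in bijection with the vertices of $I(F)$ -- that is, with the variables and clauses of $F$ -- and every edge $e$ of $T$ splits the leaves into two sides $A_e$ and $\bar A_e$ of bounded cut\hy rank.

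Fix such a cut, with variables $X$ and clauses $D$ on the near side and the remaining variables and clauses on the far side. The crucial observation is that the cut\hy rank controls the number of distinct \emph{neighborhood types} across the cut: two far\hy side clauses that contain exactly the same set of near\hy side variables (ignoring signs) are grouped together, and there are at most $2^{O(k)}$ such groups, since the rows of the bipartite adjacency matrix span a space of bounded dimension. I would then argue that, for counting, a partial assignment $\alpha \colon X \to \{0,1\}$ need only be summarized by which far\hy side clauses it satisfies together with which near\hy side clauses in $D$ it leaves unsatisfied. For each neighborhood type $N$, a far\hy side clause $c$ with $\mathrm{vars}(c) \cap X = N$ has a unique falsifying restriction $\beta_c \in \{0,1\}^N$, and $\alpha$ fails to satisfy $c$ precisely when $\alpha|_N = \beta_c$; hence the set of far\hy side clauses of type $N$ that $\alpha$ leaves unsatisfied is determined by the single value $\alpha|_N$ and takes at most $m+1$ values, where $m$ is the number of clauses of $F$. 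Since there are at most $2^{O(k)}$ types, the total number of distinct summaries is bounded by $(m+1)^{2^{O(k)}}$, which is polynomial in the size of $F$ for fixed $k$. The same count applies symmetrically to the near\hy side clauses in $D$ viewed from the far side.

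With these summaries as states, I would maintain at each edge of $T$ a table assigning to every reachable state the number of partial assignments of the near\hy side variables realizing it. The table at a leaf is immediate, and at an internal node the near side is the disjoint union of the two children's near sides; one shows that the parent state is a function of the pair of child states -- a clause is satisfied iff it is satisfied from either child, and the neighborhood type of a surviving vertex at the parent cut is determined by its types at the children -- so the parent table is obtained by ranging over all compatible pairs of child states, multiplying the associated counts, and aggregating. At the root, where one side is empty, the state recording all clauses as satisfied carries exactly the number of models of $F$. Correctness follows by induction on $T$, and since every table has polynomially many entries and the combination at each node runs in polynomial time, the whole procedure is polynomial for each fixed $k$. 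The main obstacle is the second step: proving that bounded \emph{unsigned} cut\hy rank still yields only polynomially many relevant summaries even though \#SAT is sign\hy sensitive, and verifying the gluing lemma that a parent state depends only on the two child states, which is what makes the recurrence well defined and the combination efficient.
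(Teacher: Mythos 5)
Your overall architecture (compute a rank decomposition for fixed $k$, then run bottom-up dynamic programming over the cuts with polynomially many states per cut) matches the paper, and your analysis of the \emph{first} component of your state --- the set of far-side clauses satisfied by $\alpha$ --- is exactly the paper's argument: at most $2^{O(k)}$ neighborhood types across the cut, at most $m+1$ achievable projections per type, hence $(m+1)^{2^{O(k)}}$ possibilities in total. The gap is in the \emph{second} component. You propose to record the set of near-side clauses in $D$ that $\alpha$ leaves unsatisfied, and you claim that ``the same count applies symmetrically.'' It does not: the symmetric count bounds the number of sets of near-side clauses satisfiable by a \emph{far-side} assignment (only the occurrences of a clause's variables that cross the cut are controlled by the cut-rank), whereas your second component depends on how near-side clauses meet \emph{near-side} variables, which the cut-rank does not constrain at all. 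Concretely, take variables $x_1,\dots,x_n$ and clauses $C_i=\{x_i,u\}$ all on the near side, with a single variable $u$ on the far side; the cut-rank is $1$, yet the set of near-side clauses left unsatisfied by $\alpha$ is $\{C_i : \alpha(x_i)=0\}$, which takes $2^n$ distinct values. So your state space is exponential and the DP is not polynomial for fixed $k$.

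The paper's fix is precisely the ingredient your proposal is missing: the ``expectation from the outside.'' Instead of recording which clauses of $D$ are still unsatisfied, one records a \emph{guessed} projection $\mathit{in}=D(\tau)$ for some far-side assignment $\tau$ --- of which there are only $(m+1)^{2^{O(k)}}$ by the genuinely symmetric version of your type argument --- and requires only that every clause of $D$ be either satisfied by $\alpha$ or contained in $\mathit{in}$. A single $\alpha$ then carries many states, so the naive ``count assignments per state'' recurrence no longer literally holds; the paper recovers correctness by showing that for a fixed parent state each $\alpha$ restricts to a \emph{unique} generating pair of child states (so the products of child counts sum disjointly), and by working with lower-bounding functions that are guaranteed exact only on the ``proper'' states, which suffices because the root state $(\emptyset,\emptyset)$ carrying the model count is proper. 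Without replacing ``what $\alpha$ still leaves false'' by ``what the outside is expected to satisfy,'' your gluing step is well defined but ranges over exponentially many table entries, and the claimed polynomial bound fails.
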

This result generalizes polynomial\hy time tractability results for classes of
formulas with signed incidence graphs of bounded clique\hy
width~\cite{FischerMakowskyRavve06} and classes of formulas with incidence
graphs of bounded modular treewidth~\cite{PaulusmaSlivovskySzeider13}. The
situation is illustrated in Figure~\ref{fighierarchy} (for a survey of results
for width\hy based parameters, see
\cite{OrdyniakPaulusmaSzeider13,PaulusmaSlivovskySzeider13}).
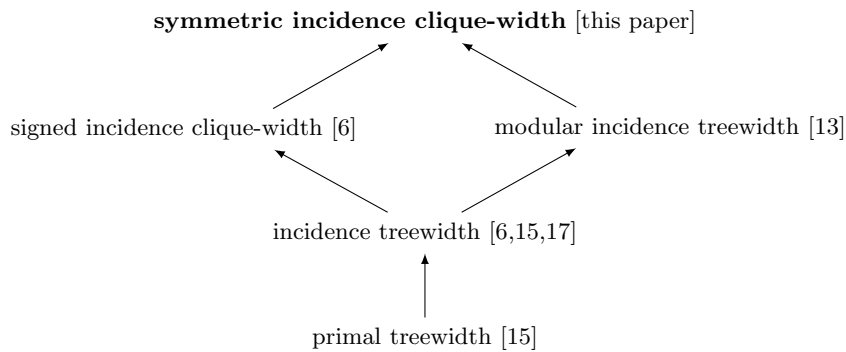
\begin{figure*}
\begin{center}
\small
\begin{tikzpicture}[>=latex,auto,yscale=0.7]
\node (CW) at (0,6) {\textbf{symmetric incidence clique-width} [this paper]};
\node (MTW) at (2.5,4) {~~~~~~~~~~~~~~modular incidence treewidth~\cite{PaulusmaSlivovskySzeider13}};
\node (SCW) at (-2.5,3.97) {signed incidence
    clique-width~\cite{FischerMakowskyRavve06}~~~~~~~~~~~~~~};
\node (ITW) at (0,2) {incidence treewidth~\cite{FischerMakowskyRavve06,SamerSzeider10,Szeider04b}};
\node (PTW) at (0,0) {primal treewidth~\cite{SamerSzeider10}};
\path[thin, ->] (PTW) edge (ITW);
\path[thin, ->] (ITW) edge (SCW);
\path[thin, ->] (ITW) edge (MTW);
\path[thin, ->] (SCW) edge (CW);
\path[thin, ->] (MTW) edge (CW);
\end{tikzpicture}
\caption{A hierarchy of structural parameters. An arc from a parameter
  $p$ to a parameter $q$ reads as ``for any class of formulas, $q$ is bounded whenever $p$ is
  bounded.''  Bold type is used to indicate parameters that render
  \#SAT polynomial-time tractable when bounded by a constant.}
\label{fighierarchy}
\end{center}
\end{figure*}
Our result is obtained through a combination of dynamic
programming on a decomposition tree with the representation of truth
assignments by \emph{projections} (i.e., sets of clauses satisfied by
these assignments). This extends the
techniques used to prove polynomial\hy tractability of \#SAT for
classes of formulas with incidence graphs of bounded modular treewidth
\cite{PaulusmaSlivovskySzeider13}; there, partial assignments are
partitioned into equivalence classes by an equivalence relation
roughly defined as follows: two assignments are equivalent whenever
they satisfy the same set of clauses of a certain formula induced by a
subtree of the decomposition. To make bottom\hy up dynamic programming
work, it is enough to record the number of assignments in each
equivalence class. This approach does not carry over to the case of
bounded symmetric clique\hy width for principal reasons: the number of
equivalence classes of such a relation can be exponential in the size
of the (sub)formula.

To deal with this, our algorithm uses the technique of taking into
account an ``expectation from the outside''
\cite{BuixuanTelleVatshelle10,GanianHlineny10,GanianHlinenyObdrzalek13}.
The underlying idea is that the information one has to record for any
particular partial solution can be reduced significantly if one
includes an ``expectation'' about what this partial solution will be
combined with to form a complete solution. This trick allows us to
bound the number of records required for dynamic programming by a
polynomial in the number of clauses of the input formula.

For all parameters considered in Figure~\ref{fighierarchy},
propositional model counting is polynomial\hy time tractable if the
parameter is bounded by a constant, but some of them even admit so\hy
called {\FPT} algorithms. The runtime of an {\FPT} algorithm is
bounded by a function of the form $f(k)\:p(l)$, where $f$ is an
arbitrary computable function and $p$ is a polynomial with order
independent of the parameter $k$. As we will see, the order of the
polynomial bounding the runtime in Theorem~\ref{thmmain} is dependent
on the parameter. One may wonder whether this can be avoided, that is,
whether the problem admits an {\FPT} algorithm. The following result
shows that this is not possible, subject to an assumption from
parameterized complexity.
\begin{theorem}[\cite{OrdyniakPaulusmaSzeider13}]\label{thmhardness}
{\sc SAT}, parameterized by the symmetric clique\hy width of the incidence graph of the input formula,
is $\W[1]$\hy hard.
\end{theorem}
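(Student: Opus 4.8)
The plan is to establish $\W[1]$\hy hardness via a parameterized reduction from \textsc{Partitioned Clique} (also called \textsc{Multicolored Clique}): given a graph $G$ whose vertex set is partitioned into $k$ color classes $V_1,\dots,V_k$, decide whether $G$ has a clique containing exactly one vertex from each class. This problem is well known to be $\W[1]$\hy hard when parameterized by $k$, so it suffices to construct, in time polynomial in $\Card{G}$ and $k$, a CNF formula $F$ that is satisfiable \emph{iff} $G$ has such a clique and whose incidence graph $I(F)$ has symmetric clique\hy width bounded by a function of $k$ alone.

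The key design principle is that $I(F)$ records only \emph{which} variables occur in a clause, not with which sign; hence all combinatorial information about the edges of $G$ can be pushed into the \emph{signs} of literals, where it is invisible to $I(F)$. Concretely, I would encode the choice of a vertex from $V_i$ by a block $B_i$ of $\lceil\log\Card{V_i}\rceil$ Boolean variables, so that an assignment to $B_i$ names a vertex $\sigma(i)$ in binary. For every pair of classes $i<j$ and every \emph{non}\hy edge between a vertex of $V_i$ and a vertex of $V_j$, I would add one clause over the variables $B_i\cup B_j$ that is falsified exactly by the assignment selecting that forbidden pair; likewise, for each class $i$ I would add one clause over all of $B_i$ per bit pattern that names no vertex of $V_i$. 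A satisfying assignment then selects one vertex per class so that every cross\hy class pair is an edge, i.e.\ a partitioned clique, and conversely.

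The crucial point is the structure of $I(F)$. Every clause introduced for the pair $(i,j)$ is adjacent to the \emph{entire} blocks $B_i$ and $B_j$, independently of which non\hy edge it forbids, and every range clause for block $B_i$ is adjacent to all of $B_i$. Consequently all variables inside a block are twins, all clauses associated with a fixed pair $(i,j)$ are twins, and $I(F)$ is a blow\hy up (by independent sets) of a fixed graph $H$ on the $k+\binom{k}{2}$ ``block'' and ``pair'' vertices, in which a pair vertex $\{i,j\}$ is joined precisely to the block vertices $i$ and $j$. Since replacing vertices by twin classes does not increase clique\hy width beyond the number of classes, $I(F)$ has clique\hy width $O(k^{2})$, and hence, by the equivalence recalled in the introduction, symmetric clique\hy width bounded by a function of $k$.

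I expect the main obstacle to lie exactly here: simultaneously encoding an \emph{arbitrary} graph $G$ and keeping the incidence graph structurally simple. These goals pull in opposite directions, because the clique\hy width of a bipartite graph is itself a faithful measure of the complexity of its adjacency pattern, so any attempt to represent the edges of $G$ through the incidence structure would blow up the width. The reduction circumvents this by representing each vertex logarithmically and offloading all edge information onto literal signs, which leaves only $O(k^{2})$ distinct neighborhoods in $I(F)$. The remaining work is routine: verifying the equivalence between satisfying assignments and partitioned cliques, confirming that the reduction runs in polynomial time (the formula has $O(k^{2}\Card{V}^{2})$ clauses, each with $O(\log\Card{V})$ literals), and checking that the blow\hy up bound on clique\hy width indeed transfers to symmetric clique\hy width.
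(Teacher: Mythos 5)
The paper does not actually prove Theorem~\ref{thmhardness}: the result is imported from \cite{OrdyniakPaulusmaSzeider13}, and the only argument supplied in the text is the one\hy sentence remark that the hardness, originally stated for clique\hy width, carries over to symmetric clique\hy width. So there is no in\hy paper proof to measure your reduction against; what you have written is a self\hy contained proof of the cited result, and as a sketch it is sound. The reduction from \textsc{Partitioned Clique} is the right starting point, the logarithmic vertex encoding with all edge information pushed into literal signs is exactly the idea that keeps the (sign\hy blind) incidence graph simple, and the equivalence between satisfying assignments and multicolored cliques checks out in both directions.

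Two small repairs. First, your quotient graph $H$ has more than $k+\binom{k}{2}$ vertices: a range clause for block $B_i$ has neighborhood exactly $B_i$, which differs from the neighborhood of every pair clause and of every variable, so the range clauses contribute up to $k$ further twin classes; the count remains $O(k^2)$, so nothing breaks. Second, rather than bounding the clique\hy width of $I(F)$ and then invoking ``the equivalence recalled in the introduction'' (the paper only records the inequality $\mathit{cwd}\leq 2\,\mathit{scw}$, which is the wrong direction for converting your upper bound on clique\hy width into one on symmetric clique\hy width), it is cleaner to bound the symmetric clique\hy width directly: every vertex of $I(F)$ lies in one of $O(k^2)$ global false\hy twin classes, and two global twins lying in $X$ are $\equiv_X$\hy equivalent for every cut $(X,\overline{X})$, so $\iota_{I(F)}(X)=O(k^2)$ for all $X$ and hence \emph{every} decomposition tree of $I(F)$ has index $O(k^2)$, giving $\mathit{scw}(I(F))=O(k^2)$ outright. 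With that, your map is a genuine FPT reduction from a $\W[1]$\hy hard problem and the theorem follows.
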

To be precise, the result proven in \cite{OrdyniakPaulusmaSzeider13}
is stated in terms of clique\hy width. However, since the clique\hy
width of a graph is at most twice its symmetric clique\hy width (see~\cite{Courcelle04}), the result carries over to symmetric clique\hy
width. 
\section{Preliminaries}\label{sec:preliminaries}
Let $f: X \rightarrow Y$ be a function and $X' \subseteq X$. We let
$f(X')= \SB f(x) \in Y \SM x \in X' \SE$. Let $X^*$
and $Y^*$ be sets, and let $g:X^*\rightarrow Y^*$ be a function with
$g(x)=f(x)$ for all $x\in X\cap X^*$. Then the function $f\cup g:
X\cup X^*\to Y\cup Y^*$ is defined as $(f\cup g)(x)=f(x)$ if $x\in X$
and $(f\cup g)(x)=g(x)$ if $x\in X^*\setminus X$.

\paragraph{Graphs.} The graphs considered in this paper are loopless, simple, and undirected. If
$G$ is a graph and $v$ is a vertex of $G$, we let $N(v)$ denote the set of all
neighbors of $v$ in $G$. For a tree $T$ we write $L(T)$ to denote the set of
leaves of~$T$. Let $\CCC$ be a class of graphs and let $f$ be a mapping
(invariant under isomorphisms) that associates each graph $G$ with a non\hy
negative real number. We say~$\CCC$ \emph{has bounded $f$} if there is a $c$
such that $f(G) \leq c$ for every $G \in \CCC$.
\paragraph{Formulas.}
We assume an infinite supply of propositional \emph{variables}. A
\emph{literal} is a variable $x$ or a negated variable $\ol{x}$; we
put $\var(x) = \var(\ol{x}) = x$; if $y=\ol{x}$ is a literal, then we
write $\ol{y}=x$.  For a set $S$ of literals we write
$\ol{S}=\SB\ol{x} \SM x\in S\SE$; $S$ is \emph{tautological} if $S\cap
\ol{S}\neq \emptyset$.  A \emph{clause} is a finite non-tautological
set of literals.  A finite set of clauses is a \emph{CNF formula} (or
\emph{formula}, for short).  The \emph{length} of a formula $F$ is
given by $\sum_{C\in F}|C|$.  A variable $x$ \emph{occurs} in a clause
$C$ if $x \in C\cup \ol{C}$. We let $\var(C)$ denote the set of
variables that occur in $C$.  A variable $x$ \emph{occurs} in a
formula $F$ if it occurs in at least one of its clauses, and we let
$\var(F)=\bigcup_{C\in F} \var(C)$. If $F$ is a formula and $X$ a set
of variables, we let $\proj{F}{X} = \SB C \in F \SM X \subseteq
\var(C) \SE$. The \emph{incidence graph} of a formula $F$ is the
bipartite graph $I(F)$ with vertex set $\var(F)\cup F$ and edge set
$\SB Cx \SM C\in F$ and $x\in \var(C)\SE$.

Let $F$ be a formula. A \emph{truth assignment} is a mapping $\tau: X
\rightarrow \{0, 1\}$ defined on some set of variables $X \subseteq
\var(F)$. We call $\tau$ \emph{total} if $X = \var(F)$ and
\emph{partial} otherwise. For $x\in X$, we define
$\tau(\overline{x})=1-\tau(x)$. A truth assignment $\tau$
\emph{satisfies} a clause $C$ if $C$ contains some literal~$\ell$ with
$\tau(\ell)=1$. If $\tau$ satisfies all clauses of $F$, then $\tau$
\emph{satisfies} $F$; in that case we call $F$ satisfiable. The
\emph{Satisfiability} (SAT) problem is that of testing whether a given
formula is satisfiable. The \emph{propositional model counting}
(\#SAT) problem is a generalization of SAT that asks for the number of
satisfying total truth assignments of a given formula. For a graph
class $\CCC$, we let $\#\textnormal{SAT}(\CCC)$ be the restriction of
\#SAT to instances $F$ with $I(F) \in \CCC$.

\paragraph{Decomposition Trees.}
We review decomposition trees following the presentation
in~\cite{BuixuanTelleVatshelle11}. Let $G = (V,E)$ be a graph. A
\emph{decomposition tree} for $G$ is a pair $(T, \delta)$, where $T$
is a rooted binary tree and $\delta: L(T) \rightarrow V$ is a
bijection. For a subset $X \subseteq V$ let $\overline{X} = V
\setminus X$. We associate every edge $e \in E(T)$ with a bipartition
$P_e$ of~$V$ obtained as follows. If $T_1$ and $T_2$ are the
components obtained by removing~$e$ from $T$, we let $P_e = (L(T_1),
L(T_2))$. Note that $L(T_2) = \overline{X}$ for $X = L(T_1)$. A
function $f: 2^V \rightarrow \mathbb{R}$ is \emph{symmetric} if $f(X)
= f(\overline{X})$ for all $X \subseteq V$. Let $f: 2^V \rightarrow
\mathbb{R}$ be a symmetric function. The $f$\hy width of $(T, \delta)$
is the maximum of $f(X) = f(\overline{X})$ taken over the bipartitions
$P_e = (X, \overline{X})$ for all $e \in E(T)$. The $f$\hy width of
$G$ is the minimum of the $f$\hy widths of the decomposition trees of
$G$.

Let $A(G)$ stand for the \emph{adjacency matrix} of $G$, that is, the $V
\times V$ matrix $A(G) = (a_{vw})_{v \in V, w \in V}$ such that $a_{vw} = 1$
if $vw \in E$ and $a_{vw} = 0$ otherwise. For $X, Y \subseteq V$, let
$A(G)[X,Y]$ denote the $X \times Y$ submatrix $(a_{vw})_{v \in X, w \in
  Y}$. The \emph{cut\hy rank} function $\rho_G: 2^V \rightarrow \mathbb{R}$ of
$G$ is defined as
\begin{align*}
  \rho_G(X)  = \mathit{rank}(A(G)[X, V \setminus X]),
\end{align*}
where $\mathit{rank}$ is the rank function of matrices over $\mathbb{Z}_2$. The row
and column ranks of any matrix are equivalent, so this function is symmetric.
The \emph{rank\hy width} of a decomposition tree $(T, \delta)$ of $G$, denoted
$\mathit{rankw}(T, \delta)$, is the $\rho_G$\hy width of $(T, \delta)$, and
the \emph{rank\hy width} of $G$, denoted $\mathit{rankw}(G)$, is the
$\rho_G$\hy width of $G$.

Let $X$ be a proper nonempty subset of $V$. We define an equivalence relation~$\equiv_X$
on $X$ as
\begin{align*}
  x \equiv_X y \textnormal{ iff, for every $z \in V \setminus X$, } xz
  \in E \Leftrightarrow yz \in E.
\end{align*}
The \emph{index} of $X$ in $G$ is the cardinality of $X/\mathord\equiv_X$, that is,
the number of equivalence classes of $\equiv_X$. We let $\mathit{index}_G: 2^V
\rightarrow \mathbb{R}$ be the function that maps each proper nonempty subset
$X$ of $V$ to its index in $G$. We now define the function $\iota_G: 2^V
\rightarrow \mathbb{R}$ as
\begin{align*}
  \iota_G(X) = \max(\mathit{index}_G(X), \mathit{index}_G(V \setminus X)).
\end{align*}
\begin{sloppypar}
This function is trivially symmetric. The \emph{index} of a decomposition tree $(T,
\delta)$ of~$G$, denoted $\mathit{index}(T, \delta)$, is the $\iota_G$\hy
width of $(T, \delta)$. The \emph{symmetric clique\hy width}~\cite{Courcelle04} of~$G$, denoted
$\mathit{scw}(G)$, is the $\iota_G$\hy width of $G$.
\end{sloppypar}

Symmetric clique\hy width and rank\hy width are closely related graph
parameters. In fact, the index of a decomposition tree can be bounded in terms
of its rank\hy width.
\begin{lemma}\label{lemscwrankw}
  For every graph $G$ and decomposition tree $(T, \delta)$ of $G$, $\mathit{rankw}(T,\delta) \leq \mathit{index}(T, \delta) \leq
  2^{\mathit{rankw}(T, \delta)}$.
\end{lemma}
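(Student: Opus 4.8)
The plan is to prove the stronger pointwise statement that for every bipartition $(X, \overline{X})$ of $V$ one has $\rho_G(X) \le \iota_G(X) \le 2^{\rho_G(X)}$, and then recover the lemma by taking the maximum over the bipartitions $P_e = (X_e, \overline{X_e})$ associated with the edges $e \in E(T)$. Since $t \mapsto 2^t$ is monotone, the pointwise upper bound yields $\mathit{index}(T,\delta) = \max_e \iota_G(X_e) \le \max_e 2^{\rho_G(X_e)} = 2^{\max_e \rho_G(X_e)} = 2^{\mathit{rankw}(T,\delta)}$, while the pointwise lower bound yields $\mathit{rankw}(T,\delta) = \max_e \rho_G(X_e) \le \max_e \iota_G(X_e) = \mathit{index}(T,\delta)$. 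So the entire argument reduces to analyzing a single fixed matrix.

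First I would translate the index into a statement about the matrix $M := A(G)[X, \overline{X}]$ over $\mathbb{Z}_2$. By definition, two vertices $x, y \in X$ satisfy $x \equiv_X y$ exactly when they have the same neighbors in $\overline{X}$, i.e.\ exactly when the rows of $M$ indexed by $x$ and $y$ coincide. Hence $\mathit{index}_G(X)$ equals the number of distinct rows of $M$. Applying the same observation to $\overline{X}$ and to the matrix $A(G)[\overline{X}, X] = M^{\mathsf T}$, the quantity $\mathit{index}_G(\overline{X})$ equals the number of distinct rows of $M^{\mathsf T}$, that is, the number of distinct columns of $M$. Writing $r := \rho_G(X) = \mathit{rank}(M)$, the task is therefore to bound the numbers of distinct rows and columns of $M$ between $r$ and $2^r$.

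For the upper bound I would use that the row space of $M$ over $\mathbb{Z}_2$ is an $r$\hy dimensional vector space and hence contains exactly $2^r$ vectors; since every row of $M$ lies in this space, $M$ has at most $2^r$ distinct rows. Because the column rank of $M$ also equals $r$, the same argument bounds the number of distinct columns by $2^r$, so $\iota_G(X) = \max(\mathit{index}_G(X), \mathit{index}_G(\overline{X})) \le 2^r$. For the lower bound I would choose $r$ linearly independent rows of $M$, which exist because $\mathit{rank}(M) = r$; linearly independent vectors are pairwise distinct, so $M$ has at least $r$ distinct rows, giving $\iota_G(X) \ge \mathit{index}_G(X) \ge r = \rho_G(X)$.

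There is no serious obstacle here: the argument is a short piece of linear algebra over $\mathbb{Z}_2$, and the only things to get right are the correspondence between equivalence classes of $\equiv_X$ and distinct rows/columns of $M$, and the fact that an $r$\hy dimensional space over $\mathbb{Z}_2$ has $2^r$ elements. The one point I would state carefully is that $\rho_G$, and hence $\mathit{rank}(M)$, is the same whether read off from $X$ or from $\overline{X}$ --- this is exactly the symmetry of the cut\hy rank noted in the preliminaries, and it is what lets the single value $r$ control both the row count and the column count simultaneously.
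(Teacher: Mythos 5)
Your proposal is correct and follows essentially the same route as the paper: identify $\mathit{index}_G(X)$ and $\mathit{index}_G(\overline{X})$ with the numbers of distinct rows and columns of $A(G)[X,\overline{X}]$, bound these between the rank $r$ and $2^r$ by elementary linear algebra over $\mathbb{Z}_2$, and take the maximum over the cuts induced by the edges of $T$. The only cosmetic difference is that for the lower bound you exhibit $r$ linearly independent (hence distinct) rows, whereas the paper states the equivalent fact that the number of distinct rows upper-bounds the rank; both are fine.
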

\newcommand{\pflemmascwrankw}[0]{
\begin{proof}
  Let $G = (V,E)$ be a graph and $X$ be a nonempty proper subset of $V$. For
  every pair of vertices $x,y \in X$ the rows of $A(G)[X, V \setminus X]$ with
  indices $x$ and $y$ are identical if and only if $x \equiv_X y$. So
  $\mathit{index}_G(X)$ is precisely the number of distinct rows of $A(G)[X, V
  \setminus X]$, which is an upper bound on the rank of $A(G)[X, V \setminus
  X]$ over $\mathbb{Z}_2$. Symmetrically, $\mathit{index}_G(V \setminus X)$ is the
  number of distinct columns of $A(G)[X, V \setminus X]$, which is also an
  upper bound on the rank. So $\rho_G(X) \leq \iota_G(X)$, which proves the
  left inequality. The rank of $A(G)[X, V \setminus X]$ is the cardinality of a basis
  for the matrix's row (column) space. That is, each of its row (column)
  vectors can be represented as a linear combination of $\rho_G(X)$ row
  (column) vectors. Over $\mathbb{Z}_2$, any linear combination can be obtained using
  only $0$ and $1$ as coefficients. Accordingly, there can be at most
  $2^{\rho(X)}$ distinct rows (columns) in $A(G)[X, V \setminus X]$. So
  $\iota_G(X) \leq 2^{\rho_G(X)}$, and the right inequality follows.
\end{proof}}
\longversion{\pflemmascwrankw}
\begin{corollary}\label{corscwrankw}
  For every graph $G$, $\mathit{rankw}(G) \leq  \mathit{scw}(G) \leq 2^{\mathit{rankw}(G)}$.
\end{corollary}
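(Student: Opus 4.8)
The plan is to lift the per\hy decomposition\hy tree inequalities of Lemma~\ref{lemscwrankw} to the graph level by choosing, for each of the two desired inequalities, a decomposition tree that is optimal for the appropriate parameter. The essential observation is that both $\mathit{rankw}(G)$ and $\mathit{scw}(G)$ are defined as minima over all decomposition trees $(T,\delta)$ of~$G$ — of $\mathit{rankw}(T,\delta)$ and $\mathit{index}(T,\delta)$, respectively — so each inequality reduces to applying the corresponding bound from the lemma at a well\hy chosen tree and then invoking minimality on the other side.

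For the left inequality $\mathit{rankw}(G) \le \mathit{scw}(G)$, I would fix a decomposition tree $(T,\delta)$ attaining the minimum that defines $\mathit{scw}(G)$, so that $\mathit{index}(T,\delta) = \mathit{scw}(G)$. The left inequality of Lemma~\ref{lemscwrankw} then gives $\mathit{rankw}(T,\delta) \le \mathit{index}(T,\delta) = \mathit{scw}(G)$, and since $\mathit{rankw}(G)$ is itself a minimum over all decomposition trees it is bounded above by $\mathit{rankw}(T,\delta)$, which yields the claim.

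For the right inequality $\mathit{scw}(G) \le 2^{\mathit{rankw}(G)}$, I would symmetrically fix a decomposition tree $(T,\delta)$ attaining the minimum that defines $\mathit{rankw}(G)$, so that $\mathit{rankw}(T,\delta) = \mathit{rankw}(G)$. The right inequality of the lemma gives $\mathit{index}(T,\delta) \le 2^{\mathit{rankw}(T,\delta)} = 2^{\mathit{rankw}(G)}$, and minimality of $\mathit{scw}(G)$ over all decomposition trees gives $\mathit{scw}(G) \le \mathit{index}(T,\delta)$, completing the argument.

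There is essentially no hard step here, and the corollary is a routine consequence of the lemma. The only point that requires care is that the two inequalities must be established using \emph{different} optimal trees — one minimizing the index, the other minimizing the rank\hy width — since a single decomposition tree need not simultaneously minimize both quantities. Using one common tree for both would still produce valid bounds, but the tight min\hy to\hy min comparison genuinely depends on selecting the optimizer appropriate to each side.
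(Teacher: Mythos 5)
Your proof is correct and is exactly the argument the paper intends: the corollary is stated without a separate proof precisely because it follows from Lemma~\ref{lemscwrankw} by taking, for each inequality, a decomposition tree that attains the minimum defining the parameter on the appropriate side. Your remark about needing a different optimal tree for each inequality is a fair point of care, but there is no substantive difference from the paper's (implicit) reasoning.
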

Runtime bounds for the dynamic programming algorithm presented below are more
naturally stated in terms the index of the underlying decomposition tree than
in terms of its rank\hy width. However, to the best of our knowledge, there is
no polynomial\hy time algorithm for computing decomposition trees of minimum
index directly -- instead, we will use the following result to compute
decomposition trees of minimum rank\hy width.
\begin{theorem}[\cite{HlinenyOum08}]\label{thmrankdecomp} Let $k \in \Nat$ be a constant and
  $n \geq 2$. For an $n$-vertex graph~$G$, we can output a
  decomposition tree of rank\hy width at most $k$ or confirm that the
  rank-width of $G$ is larger than $k$ in time $O(n^3)$.
\end{theorem}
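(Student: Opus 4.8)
The plan is to reduce the problem of computing a rank-decomposition of a graph to the problem of computing a branch-decomposition of an associated binary matroid, for which an algorithmic toolkit over fixed finite fields is available. First I would associate with the $n$-vertex graph $G$ the binary matroid $M(G)$ represented over $\mathbb{Z}_2$ by the matrix $[\,I_n \mid A(G)\,]$, where $A(G)$ is the adjacency matrix. A result of Oum relates the two parameters: the branch-width of $M(G)$ and the rank-width of $G$ differ by at most one, and, crucially, a branch-decomposition of $M(G)$ of width $w$ can be translated in polynomial time into a rank-decomposition of $G$ of width at most $w$ and vice versa. This lets me work entirely with the matroid, where the cut-rank function becomes the matroid connectivity function, and the whole task reduces to the following: given a binary matroid on a ground set indexed by $V$, output a branch-decomposition of width at most $k$ or certify that its branch-width exceeds $k$.

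Next I would obtain an approximate decomposition cheaply. Using the polynomial-time approximation algorithm of Oum and Seymour, in time $O(n^3)$ one can either produce a branch-decomposition of $M(G)$ whose width is bounded by a fixed function $g(k)$ (e.g.\ $3k+1$), or conclude outright that the branch-width is larger than $k$; in the latter case we are already done. So the remaining work is to decide \emph{exactly}, given a decomposition of bounded width $g(k)$, whether a decomposition of width at most $k$ exists, and to construct one if it does.

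The exact step is driven by dynamic programming along the approximate decomposition, exploiting a finite-state characterization of bounded-width matroids over a fixed finite field. Two facts combine here: by the structure theory of Geelen, Gerards, and Whittle, for each fixed $k$ the binary matroids of branch-width at most $k$ form a minor-closed class with a \emph{finite} list of excluded minors, so the property ``branch-width $\le k$'' is expressible in monadic second-order logic; and by Hliněný's theorem such a property, on matroids of bounded branch-width, is recognizable by a finite tree automaton running on a ``parse tree'' that encodes the matroid. Since the approximate decomposition has width $g(k)=O_k(1)$, it yields such a parse tree, and the automaton can be evaluated bottom-up in time linear in $n$ (with a constant depending on $k$), simultaneously testing feasibility and, by standard traceback, constructing a branch-decomposition of width at most $k$ whenever one exists. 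Translating this decomposition back through the reduction of the first step yields the desired rank-decomposition of $G$, and the dominant cost $O(n^3)$ is inherited from the approximation step.

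The main obstacle is this exact step, and specifically making the finite-state machinery both effective and efficient. Turning the qualitative finiteness---the finite excluded-minor characterization and the mere existence of a recognizing automaton---into an algorithm with an explicitly bounded state set, and arranging the dynamic programming so that it not only decides but also reconstructs an optimal decomposition, draws on the full representation and connectivity theory for matroids over finite fields (tangles, and the bounded-size signatures that summarize how a subset interacts across a cut). Verifying that the combined pipeline---reduction, approximation, and bottom-up evaluation with traceback---stays within cubic time rather than merely polynomial is the remaining delicate point.
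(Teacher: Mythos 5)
This theorem is not proved in the paper at all---it is imported verbatim from Hlin\v{e}n\'y and Oum \cite{HlinenyOum08}---so your sketch has to be measured against that work, whose broad architecture you have in fact reproduced: reduce rank-width to branch-width of an associated binary matroid, obtain a constant-factor approximate decomposition in cubic time, then exploit the finiteness of excluded minors for branch-width at most $k$ together with Hlin\v{e}n\'y's parse-tree/tree-automaton machinery on the approximate decomposition. But two of your steps conceal genuine gaps. First, the reduction as you state it is false. Oum's exact correspondence (branch-width equals rank-width plus one) holds between a binary matroid and its \emph{fundamental graph}, which is bipartite; for a general graph $G$, the fundamental graph of the matroid represented by $[\,I_n \mid A(G)\,]$ is the bipartite double of $G$, whose rank-width is not determined by that of $G$ (it can fall anywhere between $\mathit{rankw}(G)$ and $2\,\mathit{rankw}(G)$). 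So there is no width-preserving translation in both directions for this matroid, and since your entire exact step rests on exactness of the transfer, the pipeline collapses. Hlin\v{e}n\'y and Oum repair this with \emph{partitioned matroids}: for each vertex $v$ they introduce a pair of elements with columns $e_v$ and $e_v + \sum_{w \in N(v)} e_w$, constrained to stay together as a part, and the connectivity function on unions of parts is then an exact monotone function of the cut-rank $\rho_G$, so optimal widths transfer exactly between rank-decompositions of $G$ and part-respecting branch-decompositions.

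The second and larger gap is your phrase ``by standard traceback, constructing a branch-decomposition of width at most $k$ whenever one exists.'' Traceback of an accepting run only re-certifies acceptance: the automaton runs on a parse tree of the \emph{approximate} decomposition and decides the sentence ``a width-$\le k$ decomposition exists,'' but a branch-decomposition is an unrooted tree over the ground set, not a tuple of set variables, so no accepting run encodes the witness, and there is nothing for traceback to extract. Bridging precisely this decision-versus-construction gap is the central contribution of \cite{HlinenyOum08}: they show that partitioned branch-width at most $k$ is decidable by the automaton machinery, and then construct an optimal decomposition by self-reduction, repeatedly testing which pairs of parts may be merged while keeping partitioned branch-width at most $k$, assembling the tree bottom-up through roughly $n$ rounds of such feasibility tests; the cubic running time is dominated by this loop together with the approximation step, not by a single linear-time automaton pass as you assert. (A smaller nit: the cubic-time $3k+1$-approximation for rank-width is due to Oum's ``Approximating rank-width and clique-width quickly''; the Oum--Seymour algorithm is polynomial but not fixed-parameter cubic.) In short, you have named the right ingredients, but the two ideas that make the theorem true as stated---the partitioned-matroid encoding that makes the reduction exact, and the merge-based self-reduction that turns the automaton from a decider into a constructor---are exactly the ones missing from your argument.
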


\paragraph{Projections.}
Let $F$ be a set of clauses and $X$ a set of variables. For an assignment
$\sigma \in 2^{X}$ we write $F(\sigma)$ to denote the set of clauses of $F$
satisfied by $\sigma$, and call~$F(\sigma)$ a \emph{projection} of $F$. We
write $\projset{F}{X} = \SB F(\sigma) \SM \sigma \in 2^X \SE$ for the set of
projections of $F$ with respect to a set $X$ of variables.

\begin{proposition}\label{propprojbounded} 
  Let $F$ be a formula with $m$ clauses and let $X \subseteq \var(F)$ be a set
  of variables. We have $\Card{\projset{\proj{F}{X}}{X}} \leq m +
  1$. Moreover, the set $\projset{\proj{F}{X}}{X}$ can be computed in time
  polynomial in $l$, where $l$ is the length of $F$.
\end{proposition}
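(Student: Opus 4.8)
The plan is to exploit the defining property of the restriction $\proj{F}{X}$: every clause $C \in \proj{F}{X}$ satisfies $X \subseteq \var(C)$, so for each variable $x \in X$ exactly one of the literals $x$, $\ol{x}$ occurs in $C$ (recall clauses are non\hy tautological). Consequently there is a \emph{unique} assignment $\sigma_C \in 2^X$ falsifying all of the $X$\hy literals of $C$, namely $\sigma_C(x) = 0$ if $x \in C$ and $\sigma_C(x) = 1$ if $\ol{x} \in C$. First I would record the resulting dichotomy: since satisfaction of $C$ by an assignment $\sigma \in 2^X$ depends only on the $X$\hy literals of $C$, the clause $C$ is left unsatisfied by $\sigma$ precisely when $\sigma = \sigma_C$, and is satisfied by $\sigma$ otherwise.

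Writing $G = \proj{F}{X}$, this yields the clean description $G(\sigma) = G \setminus \{C \in G : \sigma_C = \sigma\}$ for every $\sigma \in 2^X$. To count the projections I would group the clauses of $G$ by the value $\sigma_C$; for each $\sigma$ that does occur as some $\sigma_C$ the removed set is nonempty, so $G(\sigma)$ is a proper subset distinct from $G$, and distinct such $\sigma$ give disjoint (hence distinct) removed sets and therefore distinct projections. Every $\sigma$ not of the form $\sigma_C$ removes nothing and yields the single projection $G$ itself. Hence $\Card{\projset{G}{X}}$ is at most the number of distinct falsifiers $\sigma_C$ plus one, which is at most $\Card{G} + 1 \le m + 1$, as claimed.

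For the algorithmic part I would compute $G = \proj{F}{X}$ by scanning the clauses of $F$ and keeping those containing all of $X$, compute $\sigma_C$ for each $C \in G$ from its $X$\hy literals, and bucket the clauses by these assignments; each bucket with falsifier $\sigma$ emits the projection $G \setminus \{C : \sigma_C = \sigma\}$, and I would additionally emit the full projection $G$ exactly when some $\sigma \in 2^X$ fails to arise, which holds precisely when the number of distinct $\sigma_C$ is smaller than $2^{\Card{X}}$ (and certainly holds once $2^{\Card{X}} > m$, so the comparison is cheap). All steps run in time polynomial in the length $l$, and the output is polynomial since there are at most $m+1$ projections, each a subset of $F$. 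The only real content here is the observation that insisting every clause contain \emph{all} of $X$ forces a single falsifying assignment per clause; this is what collapses a potentially exponential number of projections down to $m+1$. Once that is isolated I expect no genuine obstacle, only the routine care of making the disjointness/distinctness argument precise.
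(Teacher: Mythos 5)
Your proof is correct and follows essentially the same route as the paper: the paper groups the clauses of $\proj{F}{X}$ into equivalence classes by their $X$\hy literals, which is exactly the same partition as your grouping by the unique falsifying assignment $\sigma_C$, and both arguments then observe that each $\sigma \in 2^X$ removes at most one class, giving the bound $m+1$ and a polynomial\hy time enumeration (including the same check for whether the full set $\proj{F}{X}$ itself occurs as a projection). No gaps.
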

\begin{proof}
  Let $\sim_X$ be the relation on clauses defined as $C \sim_X C'$ if
  $\SB \ell \in C \SM \var(\ell)\in X\SE = \SB \ell \in C' \SM
  \var(\ell) \in X \SE$. Clearly $\sim_X$ is an equivalence
  relation. Let $\CCC_1, \dots, \CCC_l$ be the equivalence classes of
  $\sim_X$ on $\proj{F}{X}$. Recall that every clause $C$ in
  $\proj{F}{X}$ contains all variables in $X$. As a consequence, an
  assignment $\tau \in 2^X$ either satisfies all clauses in
  $\proj{F}{X}$ or it satisfies all clauses in $\proj{F}{X}$ except
  those in a unique class $\CCC_i$ for $i \in \{1,\dots,l\}$, in which
  case $\proj{F}{X}(\tau) = \proj{F}{X} \setminus \CCC_i$. Since
  $\proj{F}{X} \subseteq F$ we get $l \leq m$, and thus
  $\Card{\projset{\proj{F}{X}}{X}} \leq m + 1$. Computing
  $\projset{\proj{F}{X}}{X}$ boils down to computing
  $\CCC_1,\dots,\CCC_l$ and in turn $\proj{F}{X} \setminus \CCC_i$ for
  each $i \in \{1,\dots,l\}$, which can be done in time polynomial in
  the length of $F$. The set $\proj{F}{X}$ is contained in
  $\projset{\proj{F}{X}}{X}$ if and only if $l < 2^{\Card{X}}$, which
  can be checked in polynomial time as well.
\end{proof}

\section{An Algorithm for \#SAT}\label{sec:algorithm}
In this section, we will describe an algorithm for \#SAT via dynamic
programming on a decomposition tree. \shortversion{Due to space constraints, several
proofs are placed in the appendix. }To simplify the statements of
intermediate results, we fix a formula $F$ with $\Card{F} = m$ clauses
and a decomposition tree $(T,\delta)$ of $I(F)$ with
$\mathit{index}(T, \delta) = k$. For a node $z \in V(T)$, let $T_z$
denote the maximal subtree of $T$ rooted at $z$.  We write $\var_z$
for the set of variables $\var(F) \cap \delta(L(T_z))$ and $F_z$ for
the set of clauses $F \cap \delta(L(T_z))$. Moreover, we let
$\overline{F_z} = F \setminus F_z$ and $\overline{\var_z} = \var(F)
\setminus \var_z$.

Our algorithm combines techniques from~\cite{PaulusmaSlivovskySzeider13} with
dynamic programming using
``expectations''~\cite{BuixuanTelleVatshelle10,GanianHlineny10,GanianHlinenyObdrzalek13}.
We briefly describe the information maintained for each node $z \in V(T)$ of
the decomposition. Classes of truth assignments $\sigma \in 2^{\var_z}$ will
be represented by two sets of clauses. The first set (typically denoted
$\shapeout{}$) corresponds to the projection $\overline{F_z}(\sigma)$, that
is, the set of clauses \emph{outside} the current subtree that is satisfied by
$\sigma$. The second set is a projection~$F_z(\tau)$ for some $\tau \in
2^{\overline{\var_z}}$ so that the combined assignment $\sigma \cup \tau$
satisfies $F_z$. This set of clauses (typically denoted $\shapein{}$) is
``expected'' to be satisfied from outside the current subtree by an
``incoming'' assignment. Adopting the terminology
of~\cite{GanianHlinenyObdrzalek13}, we call these pairs of sets \emph{shapes}.
\begin{definition}[Shape]\label{defshape}
  Let $z \in V(T)$, let $\shapeout{z} \subseteq \overline{F_z}$, and let
  $\shapein{z} \subseteq F_z$. We call the pair $(\shapeout{z}, \shapein{z})$
  a \emph{shape (for z)}, and say an assignment $\tau \in 2^{\var_z}$ is
  \emph{of shape} $(\shapeout{z}, \shapein{z})$ if it satisfies the following
  conditions.
  \begin{enumerate}[label=(\roman*), leftmargin=2pc, itemsep=1em]
  \item $\overline{F_z}(\tau) = \shapeout{z}$. \label{cond:shape1}
  \item For each clause $C \in F_z$, the assignment $\tau$ satisfies $C$ or $C
    \in \shapein{z}$. \label{cond:shape2}
  \end{enumerate}
  If $\shapeout{z} \in \projset{\overline{F_z}}{\var_z}$ and
  $\shapein{z} \in \projset{F_z}{\overline{\var_z}}$ then the shape
  $(\shapeout{z}, \shapein{z})$ is \emph{proper}. We denote the set of
  shapes for $z \in V(T)$ by $\shapes{z}$ and write $N_z(s)$ to denote
  the set of assignments in $2^{\var_z}$ of shape $s \in
  \shapes{z}$. Moreover, we let $n_z(s)=\Card{N_z(s)}$.
\end{definition}
Note that an assignment can have multiple shapes, so shapes do not
partition assignments into equivalence classes.
\begin{lemma}\label{lememptyproper}
  A truth assignment $\tau \in 2^{\var(F)}$ satisfies $F$ if and only
  if it has shape~$(\emptyset, \emptyset)$. Moreover, the shape $(\emptyset, \emptyset)$ is proper.
\end{lemma}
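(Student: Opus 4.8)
The plan is to unwind Definition~\ref{defshape} at the root $r$ of the decomposition tree $T$, where all the relevant sets degenerate. Since $T_r = T$ we have $L(T_r) = L(T)$, and because $\delta$ is a bijection onto $V(I(F)) = \var(F) \cup F$, this gives $\delta(L(T_r)) = \var(F) \cup F$, hence $\var_r = \var(F)$ and $F_r = F$. Consequently $\overline{F_r} = F \setminus F_r = \emptyset$ and $\overline{\var_r} = \var(F) \setminus \var_r = \emptyset$. In particular $2^{\var_r} = 2^{\var(F)}$, so a total assignment $\tau$ is precisely an assignment living at $r$, and the pair $(\emptyset, \emptyset)$ is a legitimate shape for $r$ because $\emptyset \subseteq \overline{F_r}$ and $\emptyset \subseteq F_r$.

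For the equivalence I would check the two conditions of Definition~\ref{defshape} for $s = (\emptyset, \emptyset)$ in turn. Condition~\ref{cond:shape1} reads $\overline{F_r}(\tau) = \shapeout{r} = \emptyset$; since $\overline{F_r} = \emptyset$ we have $\overline{F_r}(\tau) = \emptyset$ for every $\tau$, so this condition holds vacuously. Condition~\ref{cond:shape2} reads: for each $C \in F_r = F$, either $\tau$ satisfies $C$ or $C \in \shapein{r} = \emptyset$; as the second disjunct never holds, this is equivalent to ``$\tau$ satisfies every clause of $F$'', i.e. $\tau$ satisfies $F$. Combining the two, $\tau$ is of shape $(\emptyset, \emptyset)$ if and only if $\tau$ satisfies $F$, which is the claimed equivalence.

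It remains to verify properness, i.e. $\shapeout{r} = \emptyset \in \projset{\overline{F_r}}{\var_r}$ and $\shapein{r} = \emptyset \in \projset{F_r}{\overline{\var_r}}$. The first reduces to $\emptyset \in \projset{\emptyset}{\var(F)}$, which holds because the empty formula satisfies $\emptyset(\sigma) = \emptyset$ for every $\sigma$, so its projection set is $\{\emptyset\}$. The second reduces to $\emptyset \in \projset{F}{\emptyset}$; here $2^{\emptyset}$ contains only the empty assignment $\sigma$, and since $\sigma$ makes no literal true it satisfies no clause, whence $F(\sigma) = \emptyset$ and $\projset{F}{\emptyset} = \{\emptyset\}$. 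Both memberships hold, so the shape $(\emptyset, \emptyset)$ is proper.

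I do not expect a genuine obstacle here: the lemma is purely definitional, and the entire argument is a matter of substituting the degenerate values $\overline{F_r} = \overline{\var_r} = \emptyset$ and $F_r = F$ into Definition~\ref{defshape}. The only point worth stating carefully, rather than silently assuming, is the computation of the two empty projections -- in particular that the empty assignment satisfies no clause, so that $\emptyset$ really is a projection of $F$ with respect to $\overline{\var_r}$.
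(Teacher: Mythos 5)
Your proof is correct and follows essentially the same route as the paper's: substitute the degenerate values $\overline{F_r} = \overline{\var_r} = \emptyset$ and $F_r = F$ into Definition~\ref{defshape}, observe that condition~\ref{cond:shape1} holds vacuously and condition~\ref{cond:shape2} reduces to satisfying $F$, and verify properness by noting that every projection of the empty clause set is empty and that the unique (empty) assignment in $2^{\overline{\var_r}}$ satisfies no clause. No gaps.
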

\newcommand{\pflememptyproper}[0]{
\begin{proof}
  Observe that $\var_r = \var(F)$, and let $\tau \in
  2^{\var_r}$. Suppose $\tau$ satisfies $F$. Since $\overline{F_r}$ is
  empty, we immediately get $\overline{F_r}(\tau) = \emptyset$, so
  $\tau$ satisfies condition~\ref{cond:shape1}. Moreover $\tau$
  satisfies every clause of $F = F_r$, so condition~\ref{cond:shape2}
  is satisfied as well. For the right to left direction, suppose
  $\tau$ has shape $(\emptyset, \emptyset)$. It follows from
  condition~\ref{cond:shape2} that $\tau$ must satisfy $F_r = F$. To
  see that $(\emptyset, \emptyset)$ is proper note that
  $\overline{F_r}(\sigma) = \emptyset$ for any $\sigma \in
  2^{\var_r}$, and that $2^{\overline{\var_r}}$ contains only the
  empty function $\epsilon: \emptyset \rightarrow \{0,1\}$ with
  $F_r(\epsilon) = \emptyset$.
\end{proof}}
\longversion{\pflememptyproper}
This tells us that $n_r((\emptyset, \emptyset))$ is equal
to the number of satisfying truth assignments of $F$. Let $x,y,z \in
V(T)$ such that $x$ and $y$ are the children of $z$, and let $s_x,
s_y,s_z$ be shapes for $x,y,z$, respectively. The assignments in
$N_x(s_x)$ and~$N_y(s_y)$ contribute to $N_z(s_z)$ if certain
conditions are met. These are captured by the following definition.
\begin{definition}\label{defgen}
  Let $x, y, z \in V(T)$ such that $x$ and $y$ are the children of $z$.
  We say two shapes $(\shapeout{x},\shapein{x}) \in \shapes{x}$ and
  $(\shapeout{y}, \shapein{y}) \in \shapes{y}$ \emph{generate} the
  shape $(\shapeout{z}, \shapein{z}) \in \shapes{z}$ whenever the
  following conditions are satisfied.
  \begin{enumerate}[label =(\arabic*), leftmargin=2pc, itemsep=1em]
  \item $\mathit{out}_z = (\mathit{out}_x \cup \mathit{out}_y) \cap \overline{F_z}$ \label{cond:gen1}
  \item $\mathit{in}_x = (\mathit{in}_z \cup \mathit{out}_y) \cap F_x$ \label{cond:gen2}
  \item $\mathit{in}_y = (\mathit{in}_z \cup \mathit{out}_x) \cap F_y$ \label{cond:gen3}
  \end{enumerate}
  We write $\generators{z}{s}$ for the set of pairs in $\shapes{x} \times
  \shapes{y}$ that generate $s \in \shapes{z}$.
\end{definition}
\begin{lemma}\label{lemgenerate}
\begin{sloppypar}
  Let $x, y, z \in V(T)$ such that $x$ and $y$ are the children of $z$,
  and let $\tau_x \in 2^{\var_x}$ be of shape $(\shapeout{x},
  \shapein{x}) \in \shapes{x}$ and $\tau_y \in 2^{\var_y}$ be of shape
  $(\shapeout{y}, \shapein{y}) \in \shapes{y}$. If $(\shapeout{x},
  \shapein{x})$ and $(\shapeout{y}, \shapein{y})$ generate the shape
  $(\shapeout{z}, \shapein{z}) \in \shapes{z}$, then $\tau = \tau_x
  \cup \tau_y$ is of shape $(\shapeout{z}, \shapein{z})$. Moreover, if
  $(\shapeout{z}, \shapein{z})$ is proper then
  $(\shapeout{x},\shapein{x})$ and $(\shapeout{y},\shapein{y})$ are
  proper.
\end{sloppypar}
\end{lemma}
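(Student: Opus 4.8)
The plan is to treat the two assertions separately, both resting on a handful of structural identities about how the data at $z$ decomposes across its children. First I would record the basic facts that, since $x$ and $y$ are the children of $z$ in a binary tree and $\delta$ is a bijection, the leaf sets of $T_x$ and $T_y$ partition those of $T_z$; hence $\var_z = \var_x \cup \var_y$ and $F_z = F_x \cup F_y$ are disjoint unions, and consequently $\overline{F_x} = \overline{F_z} \cup F_y$, $\overline{F_y} = \overline{F_z} \cup F_x$, $\overline{\var_x} = \overline{\var_z} \cup \var_y$, and $\overline{\var_y} = \overline{\var_z} \cup \var_x$, again as disjoint unions. I would also isolate two elementary identities about projections: (a) if $\sigma_1$ and $\sigma_2$ are assignments with disjoint domains, then $G(\sigma_1 \cup \sigma_2) = G(\sigma_1) \cup G(\sigma_2)$ for every clause set $G$, since a literal is satisfied by $\sigma_1 \cup \sigma_2$ exactly when it is satisfied by whichever of $\sigma_1, \sigma_2$ is defined on its variable; and (b) if $G' \subseteq G$ then $G'(\sigma) = G' \cap G(\sigma)$.

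For the first assertion I would set $\tau = \tau_x \cup \tau_y$, which lies in $2^{\var_z}$ by disjointness of $\var_x$ and $\var_y$, and verify the two shape conditions. Condition~\ref{cond:shape1} follows by computing, via (a) and (b), $\overline{F_z}(\tau) = \overline{F_z}(\tau_x) \cup \overline{F_z}(\tau_y) = (\overline{F_z} \cap \shapeout{x}) \cup (\overline{F_z} \cap \shapeout{y}) = \overline{F_z} \cap (\shapeout{x} \cup \shapeout{y})$, which equals $\shapeout{z}$ by generation condition~\ref{cond:gen1}. For condition~\ref{cond:shape2} I would take $C \in F_z$ and split on whether $C \in F_x$ or $C \in F_y$. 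In the first case the shape of $\tau_x$ gives that either $\tau_x$ satisfies $C$ (whence $\tau$ does) or $C \in \shapein{x} = (\shapein{z} \cup \shapeout{y}) \cap F_x$; in the latter subcase either $C \in \shapein{z}$, as desired, or $C \in \shapeout{y} = \overline{F_y}(\tau_y)$, so that $\tau_y$ and hence $\tau$ satisfies $C$. The case $C \in F_y$ is symmetric via condition~\ref{cond:gen3}.

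For the second assertion the two ``out'' components are free: since $\tau_x \in 2^{\var_x}$ has shape $(\shapeout{x}, \shapein{x})$ we already have $\shapeout{x} = \overline{F_x}(\tau_x) \in \projset{\overline{F_x}}{\var_x}$, and likewise for $y$. The content is in the ``in'' components, and this is the one place where properness of $(\shapeout{z}, \shapein{z})$ is used. From $\shapein{z} \in \projset{F_z}{\overline{\var_z}}$ I would fix a witness $\rho_z \in 2^{\overline{\var_z}}$ with $F_z(\rho_z) = \shapein{z}$, and then glue it to the sibling's actual assignment: setting $\rho = \rho_z \cup \tau_y \in 2^{\overline{\var_x}}$ and using (a) and (b) together with $F_x \subseteq F_z$ and $F_x \subseteq \overline{F_y}$, I would compute $F_x(\rho) = (F_x \cap \shapein{z}) \cup (F_x \cap \shapeout{y}) = F_x \cap (\shapein{z} \cup \shapeout{y}) = \shapein{x}$ by condition~\ref{cond:gen2}, giving $\shapein{x} \in \projset{F_x}{\overline{\var_x}}$. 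The symmetric choice $\rho' = \rho_z \cup \tau_x \in 2^{\overline{\var_y}}$ with condition~\ref{cond:gen3} handles $\shapein{y}$.

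The main obstacle I anticipate is precisely this witness construction for the ``in'' projections: one has to realize that the externally guaranteed assignment $\rho_z$ on $\overline{\var_z}$ must be combined with the \emph{given} assignment on the sibling's variables ($\tau_y$ for $x$, $\tau_x$ for $y$) in order to reproduce the set $\shapein{x}$ (respectively $\shapein{y}$) exactly, and to check that this combined assignment indeed lives on $\overline{\var_x}$ (respectively $\overline{\var_y}$). Everything else reduces to careful bookkeeping with identities (a) and (b) and with tracking on which side of the partition each clause lies; the disjointness of the relevant domains is what makes these identities applicable.
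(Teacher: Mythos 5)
Your proposal is correct and follows essentially the same route as the paper's proof: the same union/intersection computation for condition~\ref{cond:shape1}, the same case analysis via conditions~\ref{cond:gen2} and~\ref{cond:gen3} for condition~\ref{cond:shape2}, and the same key step for properness, namely gluing the witness $\rho_z \in 2^{\overline{\var_z}}$ for $\shapein{z}$ to the sibling's assignment to obtain a witness on $\overline{\var_x}$ (resp.\ $\overline{\var_y}$). No gaps.
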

\begin{proof}
  Suppose $(\shapeout{x}, \shapein{x})$ and $(\shapeout{y},
  \shapein{y})$ generate $(\shapeout{z}, \shapein{z})$. To see that
  $\tau$ satisfies condition~\ref{cond:shape1}, note that a clause is
  satisfied by $\tau$ if and only if it is satisfied by $\tau_x$ or
  $\tau_y$, so $\overline{F_z}(\tau_z) = \overline{F_z}(\tau_x) \cup
  \overline{F_z}(\tau_y) = (\shapeout{x} \cap \overline{F_z}) \cup
  (\shapeout{y} \cap \overline{F_z}) = \shapeout{z}$. For
  condition~\ref{cond:shape2}, let $C \in F_z = F_x \cup F_y$. Without
  loss of generality assume that $C \in F_x$. Suppose $\tau$ does not
  satisfy $C$. Then $\tau_x$ does not satisfy $C$, so we must have
  $C \in \shapein{x}$ because $\tau_x$ is of shape $(\shapeout{x},
  \shapein{x})$. But $\tau_y$ does not satisfy $C$ either, so $C
  \notin \shapeout{y}$. Combining these statements, we get $C \in
  \shapein{x} \setminus \shapeout{y}$. Because $(\shapeout{x},
  \shapein{x})$ and $(\shapeout{y}, \shapein{y})$ generate
  $(\shapeout{z}, \shapein{z})$ we have $\shapein{x} = (\shapein{z}
  \cup \shapeout{y}) \cap F_x$ by condition~\ref{cond:gen2}. It
  follows that $C \in \shapein{z}$.

  The assignments $\tau_x$ and $\tau_y$ are of shapes $(\shapeout{x},
  \shapein{x})$ and $(\shapeout{y}, \shapein{y})$ so $\shapeout{x} \in \projset{\overline{F_x}}{\var_x}$ and $\shapeout{y} \in
  \projset{\overline{F_y}}{\var_y}$ by condition~\ref{cond:shape1}.
  Suppose $(\shapeout{z}, \shapein{z})$ is proper. Then there is an
  assignment $\rho \in 2^{\overline{\var_z}}$ such that $\shapein{z}= F_z(\rho)$. The shapes $(\shapeout{x},\shapein{x})$ and
  $(\shapeout{y}, \shapein{y})$ generate $(\shapeout{z},
  \shapein{z})$, so $\shapein{x} = (\shapein{z} \cup \shapeout{y})
  \cap F_x$. Thus $\shapein{x} = (F_z(\rho) \cup
  \overline{F_y}(\tau_y)) \cap F_x$. Equivalently, $\shapein{x} =
  (F_z(\rho) \cap F_x) \cup (\overline{F_y}(\tau_y) \cap F_x)$. Since
  $F_x \subseteq F_z$ and $F_x \subseteq \overline{F_y}$ this can be
  rewritten once more as $\shapein{x} = F_x(\rho) \cup
  F_x(\tau_y)$. The domains $\overline{\var_z}$ of $\rho$ and $\var_y$
  of $\tau_y$ are disjoint, so $F_x(\rho) \cup F_x(\tau_y) = F_x(\rho
  \cup \tau_y)$. Because $\overline{\var_z} \cup \var_y =
  \overline{\var_x}$ it follows that $\shapein{x} \in
  \projset{F_x}{\overline{\var_x}}$ and so $(\shapeout{x},
  \shapein{x})$ is proper. A symmetric argument shows that
  $(\shapeout{y}, \shapein{y})$ is proper.
\end{proof}
\begin{corollary}\label{corpropergen}
  Let $x, y, z \in V(T)$ such that $x$ and $y$ are the children of $z$ in~$T$,
  and let $s \in \shapes{z}$ be proper. Suppose $s_x \in \shapes{x}$ and $s_y
  \in \shapes{y}$ generate $s$ and both $N_x(s_x)$ and $N_y(s_y)$ are
  nonempty. Then $s_x$ and $s_y$ are proper.
\end{corollary}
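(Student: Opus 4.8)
The plan is to derive this statement as an immediate consequence of Lemma~\ref{lemgenerate}. The key observation is that the hypothesis ``$N_x(s_x)$ and $N_y(s_y)$ are nonempty'' is precisely the existence statement needed to invoke that lemma: it guarantees witness assignments of shapes $s_x$ and $s_y$, which is exactly the input Lemma~\ref{lemgenerate} requires.

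Concretely, I would first use the nonemptiness of $N_x(s_x)$ to fix some assignment $\tau_x \in N_x(s_x)$; by the definition of $N_x(s)$ in Definition~\ref{defshape}, this $\tau_x \in 2^{\var_x}$ is of shape $s_x$. Symmetrically, nonemptiness of $N_y(s_y)$ yields some $\tau_y \in 2^{\var_y}$ of shape $s_y$. At this point all the hypotheses of Lemma~\ref{lemgenerate} are in place: we have an assignment $\tau_x$ of shape $s_x$, an assignment $\tau_y$ of shape $s_y$, and by assumption $s_x$ and $s_y$ generate $s$. I would then simply invoke the ``moreover'' part of Lemma~\ref{lemgenerate}: since $s$ is proper by hypothesis, that part of the lemma concludes directly that $s_x$ and $s_y$ are proper, which is the claim.

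There is no real obstacle here. The corollary is essentially a restatement of the second half of Lemma~\ref{lemgenerate} in which the explicitly named witness assignments $\tau_x, \tau_y$ have been replaced by the weaker---but, for this purpose, equivalent---assumption that the corresponding sets $N_x(s_x)$ and $N_y(s_y)$ are nonempty. The only thing to verify is the bookkeeping identification that membership $\tau_x \in N_x(s_x)$ is the same as $\tau_x$ being of shape $s_x$, which holds by definition. Accordingly the proof is a one-line application of the lemma after choosing the witness assignments.
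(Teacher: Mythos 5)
Your proof is correct and matches the paper's intended argument: the statement is presented there as an immediate corollary of Lemma~\ref{lemgenerate}, obtained exactly by picking witness assignments $\tau_x \in N_x(s_x)$ and $\tau_y \in N_y(s_y)$ and applying the ``moreover'' clause of that lemma.
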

\begin{lemma}\label{lemshape}
\begin{sloppypar}
  Let $x, y, z \in V(T)$ such that $x$ and $y$ are the children of $z$, and
  let $\tau \in 2^{\var_z}$ be a truth assignment of shape $(\shapeout{z},
  \shapein{z}) \in \shapes{z}$. Let $\tau_x~=~\proj{\tau}{\var_x}$ and $\tau_y
  = \proj{\tau}{\var_y}$. There are unique shapes $(\shapeout{x}, \shapein{x})
  \in \shapes{x}$ and $(\shapeout{y}, \shapein{y}) \in \shapes{y}$ generating
  $(\shapeout{z}, \shapein{z})$ such that $\tau_x$ has shape $(\shapeout{x},
  \shapein{x})$ and $\tau_y$ has shape $(\shapeout{y}, \shapein{y})$.
\end{sloppypar}
\end{lemma}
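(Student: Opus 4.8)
The plan is to observe that the required shapes are completely \emph{forced} by the two constraints in the statement, which yields uniqueness almost for free, and then to verify that the shapes so defined actually satisfy all of the conditions, giving existence. Throughout I will use that $x$ and $y$ being the children of $z$ means $\var_z = \var_x \cup \var_y$ and $F_z = F_x \cup F_y$ are disjoint unions, so that $\tau = \tau_x \cup \tau_y$ and a clause is satisfied by $\tau$ if and only if it is satisfied by $\tau_x$ or by $\tau_y$ (the same observation used in Lemma~\ref{lemgenerate}).

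First I would argue that the shapes are forced. If $\tau_x$ is to have shape $(\shapeout{x}, \shapein{x})$, then condition~\ref{cond:shape1} forces $\shapeout{x} = \overline{F_x}(\tau_x)$, and likewise $\shapeout{y} = \overline{F_y}(\tau_y)$. Once $\shapeout{x}$ and $\shapeout{y}$ are fixed, if the pair is additionally required to generate $(\shapeout{z}, \shapein{z})$, then conditions~\ref{cond:gen2} and~\ref{cond:gen3} force $\shapein{x} = (\shapein{z} \cup \shapeout{y}) \cap F_x$ and $\shapein{y} = (\shapein{z} \cup \shapeout{x}) \cap F_y$. Hence there is at most one candidate pair, which establishes uniqueness; I then take these four sets as the \emph{definition} of $s_x = (\shapeout{x}, \shapein{x})$ and $s_y = (\shapeout{y}, \shapein{y})$ and check the remaining requirements.

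For existence I first note that $s_x$ and $s_y$ are genuine shapes ($\shapeout{x} \subseteq \overline{F_x}$ and $\shapein{x} \subseteq F_x$ hold by construction, and symmetrically for $y$). Generation conditions~\ref{cond:gen2} and~\ref{cond:gen3} hold by definition, so only~\ref{cond:gen1} needs work: here I would use that $\tau$ has shape $(\shapeout{z},\shapein{z})$, so $\shapeout{z} = \overline{F_z}(\tau)$, together with the distributivity of satisfaction, to compute $(\shapeout{x} \cup \shapeout{y}) \cap \overline{F_z} = (\overline{F_x}(\tau_x) \cup \overline{F_y}(\tau_y)) \cap \overline{F_z}$. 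Since $\overline{F_z} \subseteq \overline{F_x}$ and $\overline{F_z} \subseteq \overline{F_y}$, this equals $\overline{F_z}(\tau_x) \cup \overline{F_z}(\tau_y) = \overline{F_z}(\tau) = \shapeout{z}$, as required.

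It remains to check that $\tau_x$ has shape $s_x$ and $\tau_y$ has shape $s_y$; I expect condition~\ref{cond:shape2} to be the only nonroutine point. Condition~\ref{cond:shape1} for $\tau_x$ is immediate from $\shapeout{x} = \overline{F_x}(\tau_x)$. For condition~\ref{cond:shape2}, take $C \in F_x$ not satisfied by $\tau_x$ and show $C \in \shapein{x} = (\shapein{z} \cup \shapeout{y}) \cap F_x$. Since $C \in F_x \subseteq F_z$ and $\tau$ has shape $(\shapeout{z}, \shapein{z})$, condition~\ref{cond:shape2} for $\tau$ gives two cases: either $C \in \shapein{z}$, and we are done; or $\tau$ satisfies $C$, in which case (as $\tau_x$ does not) $\tau_y$ must satisfy $C$, and since $F_x \subseteq \overline{F_y}$ this places $C$ in $\overline{F_y}(\tau_y) = \shapeout{y}$. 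Either way $C \in \shapein{z} \cup \shapeout{y}$ and $C \in F_x$, so $C \in \shapein{x}$. The argument for $\tau_y$ is symmetric. The crux of the whole lemma is precisely this case split: it is where the parent's ``incoming expectation'' $\shapein{z}$ and the sibling's ``outgoing'' projection $\shapeout{y}$ together account for every clause of $F_x$ left unsatisfied by $\tau_x$.
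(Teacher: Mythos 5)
Your proof is correct and follows essentially the same route as the paper's: both define $\shapeout{x},\shapeout{y}$ as the projections $\overline{F_x}(\tau_x),\overline{F_y}(\tau_y)$ and $\shapein{x},\shapein{y}$ via the generation conditions (the paper writes $\shapein{x}=(\shapein{z}\cap F_x)\cup F_x(\tau_y)$, which equals your $(\shapein{z}\cup\shapeout{y})\cap F_x$ since $F_x(\tau_y)=\shapeout{y}\cap F_x$), verify condition~\ref{cond:gen1} using $\overline{F_z}\subseteq\overline{F_x}\cap\overline{F_y}$, and handle condition~\ref{cond:shape2} by the same case split on whether the sibling assignment satisfies the clause. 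Presenting uniqueness first as ``the shapes are forced'' is a clean reordering but not a different argument.
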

\begin{proof}
  We define $\shapeout{x} = \overline{F_x}(\tau_x)$, $\shapeout{y} =
  \overline{F_y}(\tau_y)$ and let $\shapein{x} = (\shapein{z} \cap
  F_x) \cup F_x(\tau_y)$, $\shapein{y} = (\shapein{z} \cap F_y) \cup
  F_y(\tau_x)$. We prove that $(\shapeout{x}, \shapein{x})$ and
  $(\shapeout{y}, \shapein{y})$ generate $(\shapeout{z},
  \shapein{z})$. Since $\tau$ has shape $(\shapeout{z}, \shapein{z})$
  by condition~\ref{cond:shape1} we have $\shapeout{z} =
  \overline{F_z}(\tau)$. We further have $\overline{F_z}(\tau) =
  \overline{F_z}(\tau_x) \cup \overline{F_z}(\tau_y)$ by choice of
  $\tau_x$ and $\tau_y$. Because $\overline{F_z} \subseteq
  \overline{F_x}$ and $\overline{F_z} \subseteq \overline{F_y}$ we get
  $\overline{F_z}(\tau) = (\overline{F_x}(\tau_x) \cap \overline{F_z})
  \cup (\overline{F_y}(\tau_y) \cap \overline{F_z})$ and thus
  $\overline{F_z}(\tau) = (\shapeout{x} \cup \shapeout{y}) \cap
  \overline{F_z}$. That is, condition~\ref{cond:gen1} is satisfied.
  From $F_x \subseteq \overline{F_y}$ and $F_y \subseteq
  \overline{F_x}$ it follows that $F_x(\tau_y) =
  \overline{F_y}(\tau_y) \cap F_x$ and $F_y(\tau_x) =
  \overline{F_x}(\tau_x) \cap F_y$. Thus $F_x(\tau_y) = \shapeout{y}
  \cap F_x$ and $F_y(\tau_x) = \shapeout{x} \cap F_y$ by construction
  of $\shapeout{x}$ and $\shapeout{y}$. By inserting in the
  definitions of $\shapein{x}$ and $\shapein{y}$ we get $\shapein{x} =
  (\shapein{z} \cap F_x) \cup (\shapeout{y} \cap F_x)$ and
  $\shapein{y} = (\shapein{z} \cap F_y) \cup (\shapeout{x} \cap F_y)$,
  so conditions~\ref{cond:gen2} and \ref{cond:gen3} are satisfied. We
  conclude that $(\shapeout{x}, \shapein{x})$ and $(\shapeout{y},
  \shapein{y})$ generate $(\shapeout{z}, \shapein{z})$.
  
  We proceed to showing that $\tau_x$ is of shape $(\shapeout{x},
  \shapein{x})$. Condition~\ref{cond:shape1} is satisfied by
  construction. To see that condition~\ref{cond:shape2} holds, pick
  any $C \in F_x$ not satisfied by $\tau_x$. If $\tau_y$ satisfies
  $C$, then $C \in F_x(\tau_y) \subseteq \shapein{x}$. Otherwise,
  $\tau = \tau_x \cup \tau_y$ does not satisfy $C$. Since $\tau$ of
  shape $(\shapeout{z}, \shapein{z})$ this implies $C \in
  \shapein{z}$. Again we get $C \in \shapein{x}$ as $\shapein{z} \cap
  F_x \subseteq \shapein{x}$. The proof that $\tau_y$ has shape
  $(\shapeout{y}, \shapein{y})$ is symmetric.

  To show uniqueness, let $(\shapeout{x}', \shapein{x}') \in
  \shapes{x}$ and $(\shapeout{y}', \shapein{y}') \in \shapes{y}$
  generate $(\shapeout{z}, \shapein{z})$, and suppose $\tau_x$ has
  shape $(\shapeout{x}', \shapein{x}')$ and $\tau_y$ has shape
  $(\shapeout{y}', \shapein{y}')$. From condition~\ref{cond:shape1} we
  immediately get $\shapeout{x}' = \overline{F_x}(\tau_x) =
  \shapeout{x}$ and $\shapeout{y}' = \overline{F_y}(\tau_y) =
  \shapeout{y}$. Since the pairs $(\shapeout{x}',
  \shapein{x}')$,$(\shapeout{y}', \shapein{y}')$ and $(\shapeout{x},
  \shapein{x})$, $(\shapeout{y},\shapein{y})$ both generate
  $(\shapeout{z}, \shapein{z})$, it follows from
  condition~\ref{cond:gen2} that $\shapein{x}' = \shapein{x}$ and
  $\shapein{y}' = \shapein{y}$.
\end{proof}
\begin{lemma}\label{lem:counting} Let $x, y, z \in V(T)$ such that $x$ and $y$ are the children
  of $z$ in $T$, and let $s \in \shapes{z}$. The following equality holds.
\begin{align}
  n_z(s) = \sum_{(s_x, s_y) \in \generators{z}{s}} n_x(s_x)\: n_y(s_y)\label{equ:lemcounting}
\end{align}
\end{lemma}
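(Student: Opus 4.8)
The plan is to prove (\ref{equ:lemcounting}) by exhibiting a bijection between $N_z(s)$ and the formal disjoint union, over all generating pairs $(s_x, s_y) \in \generators{z}{s}$, of the product sets $N_x(s_x) \times N_y(s_y)$. The cardinality of this disjoint union is exactly $\sum_{(s_x,s_y) \in \generators{z}{s}} n_x(s_x)\,n_y(s_y)$, so the identity follows the moment the bijection is in place. The subtlety to keep in view throughout is the earlier remark that shapes do \emph{not} partition assignments: I cannot simply sort the elements of $N_z(s)$ into shape classes, and the relevant generating pair must instead be recovered canonically from each assignment.

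First I would record the structural fact underpinning everything. Since $x$ and $y$ are the children of $z$, the leaf sets $L(T_x)$ and $L(T_y)$ partition $L(T_z)$, and as $\delta$ is a bijection this yields $\var_z = \var_x \cup \var_y$ with $\var_x \cap \var_y = \emptyset$. Hence every $\tau \in 2^{\var_z}$ splits uniquely as $\tau = \tau_x \cup \tau_y$ with $\tau_x = \proj{\tau}{\var_x}$ and $\tau_y = \proj{\tau}{\var_y}$, and conversely any pair in $2^{\var_x} \times 2^{\var_y}$ recombines into a unique such $\tau$.

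Next I would define the forward map $\Phi$. Given $\tau \in N_z(s)$, Lemma~\ref{lemshape} furnishes the \emph{unique} pair $(s_x, s_y) \in \generators{z}{s}$ for which $\tau_x$ has shape $s_x$ and $\tau_y$ has shape $s_y$, and I send $\tau$ to the element $(\tau_x, \tau_y)$ tagged by this index. Injectivity is then immediate from the unique decomposition $\tau = \tau_x \cup \tau_y$. For surjectivity I would take any $(s_x, s_y) \in \generators{z}{s}$ together with $\tau_x \in N_x(s_x)$ and $\tau_y \in N_y(s_y)$, form $\tau = \tau_x \cup \tau_y \in 2^{\var_z}$, and apply Lemma~\ref{lemgenerate} to conclude $\tau \in N_z(s)$; the uniqueness clause of Lemma~\ref{lemshape} then forces $\Phi(\tau)$ to be the chosen element in the copy indexed by $(s_x, s_y)$.

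The one place demanding genuine care — and the main potential pitfall — is overcounting on the right. Because a single $\tau_x$ may carry several shapes, a pair $(\tau_x, \tau_y)$ can lie in $N_x(s_x) \times N_y(s_y)$ for more than one index, so these products are \emph{not} pairwise disjoint as subsets of $2^{\var_x} \times 2^{\var_y}$. This is precisely why I work with the tagged disjoint union, whose cardinality equals the sum by definition, and why the argument leans on the uniqueness in Lemma~\ref{lemshape}: relative to the fixed target $s$, each $\tau \in N_z(s)$ singles out one generating pair, making $\Phi$ well defined and landing in a single tagged copy. With well-definedness secured, the injectivity and surjectivity above make $\Phi$ a bijection, and comparing cardinalities delivers (\ref{equ:lemcounting}).
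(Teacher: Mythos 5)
Your proof is correct and follows essentially the same route as the paper's: the same splitting map $\tau \mapsto (\tau|_{\var_x}, \tau|_{\var_y})$, with Lemma~\ref{lemshape} supplying well-definedness and uniqueness of the generating pair and Lemma~\ref{lemgenerate} supplying surjectivity. The only difference is cosmetic: you tag the disjoint union, whereas the paper uses the uniqueness clause of Lemma~\ref{lemshape} to show that the products $N_x(s_x) \times N_y(s_y)$ for distinct pairs in $\generators{z}{s}$ are in fact pairwise disjoint, so the overcounting you guard against cannot actually occur for a fixed target shape $s$.
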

\longversion{\begin{proof}
  Let $M(s) = \bigcup_{(s_x, s_y) \in \generators{z}{s}} N_x(s_x)
  \times N_y(s_y)$. We first show that the function $f: \tau \mapsto
  (\tau|_{\var_x}, \tau|_{\var_y})$ is a bijection from $N_z(s)$ to
  $M(s)$. By Lemma~\ref{lemshape} for every $\tau \in N_z(s)$ there is
  a pair $(s_x, s_y) \in \generators{z}{s}$ such that $\tau|_{\var_x}
  \in N_x(s_x)$ and $\tau|_{\var_y} \in N_y(s_y)$. So $f$ is
  \emph{into}. By Lemma~\ref{lemgenerate}, for every pair of
  assignments $\tau_x \in N_x(s_x), \tau_y \in N_y(s_y)$ with
  $(s_x,s_y) \in \generators{z}{s}$ the assignment $\tau_x \cup
  \tau_y$ is in $N_z(s)$. Hence $f$ is \emph{surjective}. It is easy
  to see that $f$ is \emph{injective}, so $f$ is indeed a bijection.
  
  We prove that $\Card{M(s)}$ is equivalent to the right hand side of
  Equality~\ref{equ:lemcounting}. Since $\Card{N_x(s_x) \times
    N_y(s_y)} = n_x(s_x)\:n_y(s_y)$ for every pair $(s_x, s_y) \in
  \generators{z}{s}$, we only have to show that the sets $N_x(s_x)
  \times N_y(s_y)$ and $N_x(s_x') \times N_y(s_y')$ are disjoint for
  distinct pairs $(s_x, s_y), (s_x', s_y') \in \generators{z}{s}$. Let
  $(s_x, s_y), (s_x', s_y') \in \generators{z}{s}$ and suppose
  $(N_x(s_x) \times N_y(s_y)) \cap (N_x(s_x') \times N_y(s_y'))$ is
  nonempty. Pick any $(\tau_x, \tau_y) \in (N_x(s_x) \times N_y(s_y))
  \cap (N_x(s_x') \times N_y(s_y'))$. The function $f$ is a bijection,
  so $\tau_x \cup \tau_y \in N_z(s)$. By Lemma~\ref{lemshape} there is
  at most one pair $(s_x'', s_y'') \in \generators{z}{s}$ of shapes
  such that $\tau_x \in N_x(s_x'')$ and $\tau_y \in N_y(s_y'')$, so
  $(s_x, s_y) = (s_x'', s_y'') = (s_x', s_y')$.
\end{proof}}
\begin{corollary}\label{corcountingproper}
  Let $x, y, z \in V(T)$ such that $x$ and $y$ are the children of $z$ in
  $T$, and let $s \in \shapes{z}$ be proper. Let $P = \SB (s_x, s_y)
  \in \generators{z}{s} \SM s_x$ and $s_y$ are proper$\SE$. The
  following equality holds.
  \begin{align}
  n_z(s) = \sum_{\substack{(s_x, s_y) \in P}} n_x(s_x)\: n_y(s_y)\label{equ:lemcountingproper}
\end{align}
\end{corollary}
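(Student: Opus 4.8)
The plan is to derive the restricted identity (Equality~\ref{equ:lemcountingproper}) directly from the unrestricted counting identity of Lemma~\ref{lem:counting} by a pruning argument: I will show that every term of the full sum indexed by a pair that is \emph{not} in $P$ contributes zero. Since $P \subseteq \generators{z}{s}$ by definition, it then suffices to prove that $n_x(s_x)\,n_y(s_y) = 0$ whenever $(s_x, s_y) \in \generators{z}{s} \setminus P$, because dropping such terms from the sum of Lemma~\ref{lem:counting} changes nothing.

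So I would fix an arbitrary pair $(s_x, s_y) \in \generators{z}{s}$ with $(s_x, s_y) \notin P$. By the definition of $P$, at least one of $s_x, s_y$ is not proper. The key step is to invoke the contrapositive of Corollary~\ref{corpropergen}. That corollary asserts that, for a \emph{proper} $s \in \shapes{z}$, if $s_x$ and $s_y$ generate $s$ and both $N_x(s_x)$ and $N_y(s_y)$ are nonempty, then $s_x$ and $s_y$ are both proper. In our situation $s$ is proper and $(s_x, s_y)$ generates $s$, yet $s_x$ and $s_y$ are not both proper; hence the remaining hypothesis of the corollary must fail, i.e.\ at least one of $N_x(s_x)$, $N_y(s_y)$ is empty. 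Consequently $n_x(s_x) = \Card{N_x(s_x)} = 0$ or $n_y(s_y) = \Card{N_y(s_y)} = 0$, and in either case the product $n_x(s_x)\,n_y(s_y)$ is $0$.

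To conclude, I would simply combine this with Lemma~\ref{lem:counting}:
$$n_z(s) = \sum_{(s_x, s_y) \in \generators{z}{s}} n_x(s_x)\,n_y(s_y) = \sum_{(s_x, s_y) \in P} n_x(s_x)\,n_y(s_y),$$
where the second equality holds because exactly the terms indexed by $\generators{z}{s} \setminus P$ were shown to vanish. I do not expect a genuine obstacle here, as this is essentially a routine restriction of an already established sum. The only point that warrants care is the logical chain through nonemptiness: Corollary~\ref{corpropergen} connects \emph{properness} of the child shapes to \emph{nonemptiness} of their assignment sets, and it is precisely this nonemptiness link that converts ``not proper'' into ``count equal to zero.'' Provided the contrapositive of that corollary is applied correctly, nothing further is required.
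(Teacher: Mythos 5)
Your proof is correct and follows essentially the same route as the paper's: both argue via the contrapositive of Corollary~\ref{corpropergen} that any term of the sum in Lemma~\ref{lem:counting} indexed by a pair outside $P$ has $n_x(s_x)\,n_y(s_y)=0$, so restricting the sum to $P$ changes nothing. You have merely spelled out in more detail the one-line argument the paper gives.
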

\begin{proof}
  By Corollary~\ref{corpropergen} the product $n_x(s_x) n_y(s_y)$ is
  nonzero only if $s_x$ and $s_y$ are proper, for any pair $(s_x, s_y)
  \in \generators{z}{s}$. In combination with (\ref{equ:lemcounting})
  this implies (\ref{equ:lemcountingproper}).
\end{proof}
Corollary~\ref{corcountingproper} in combination with
Lemma~\ref{lememptyproper} implies that, for each $z \in V(T)$, it
is enough to compute the values $n_z(s)$ for \emph{proper} shapes $s
\in \shapes{z}$. To turn this insight into a polynomial time dynamic
programming algorithm, we still have to show that the number of proper
shapes in $\shapes{z}$ can be polynomially bounded, and that the set
of such shapes can be computed in polynomial time. We will achieve
this by specifying a subset of $\shapes{z}$ for each $z \in V(T)$ that
contains all proper shapes and can be computed in polynomial time.

We define families $\tuplestop{z}$ and $\tuplesbot{z}$
of sets of variables for each node $z \in V(T)$, as follows.
\begin{align*}
  \tuplestop{z} = \SB X \subseteq \var_z \SM \exists C \in
  \overline{F_z}\mtext{ such that } X = \var_z \cap \var(C) \SE \\
  \tuplesbot{z} = \SB X \subseteq \overline{\var_z} \SM \exists C \in F_z
  \mtext{ such that }
  X = \overline{\var_z} \cap \var(C) \SE
\end{align*}
The next lemma follows from the definition of a decomposition tree's index.
\begin{lemma}\label{lemxsbounded}
  For every node $z \in V(T)$, $\max(\Card{\tuplestop{z}}, \Card{\tuplesbot{z}}) \leq k$.
\end{lemma}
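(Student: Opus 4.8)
The plan is to read $\tuplestop{z}$ and $\tuplesbot{z}$ as counting equivalence classes of clause vertices in the bipartition of $I(F)$ induced by the edge above $z$, and then to bound those counts by the index of that bipartition. First I dispose of the root $r$: here $F_r = F$, so $\overline{F_r} = \emptyset$ and $\overline{\var_r} = \emptyset$, whence $\tuplestop{r} = \tuplesbot{r} = \emptyset$ and the claim is trivial. So assume $z \neq r$ and write $W = \delta(L(T_z)) = \var_z \cup F_z$, so that its complement in the vertex set $\var(F) \cup F$ of $I(F)$ is $\overline{W} = \overline{\var_z} \cup \overline{F_z}$. Removing the edge joining $z$ to its parent splits $T$ into $T_z$ and the remaining tree, so the bipartition of $V(I(F))$ it induces is exactly $(W, \overline{W})$. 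By the definition of the index of a decomposition tree, $\iota_{I(F)}(W) \leq \mathit{index}(T,\delta) = k$, and therefore both $\mathit{index}_{I(F)}(W) \leq k$ and $\mathit{index}_{I(F)}(\overline{W}) \leq k$.

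The key point is that $I(F)$ is bipartite with variables on one side and clauses on the other, so the neighbourhood of a clause vertex $C$ is precisely $\var(C)$ and contains no clause vertices. Hence for a clause $C \in \overline{F_z} \subseteq \overline{W}$ we have $N(C) \cap W = \var(C) \cap \var_z$, which is exactly the set that $C$ contributes to $\tuplestop{z}$. Two clauses of $\overline{F_z}$ agree on their neighbourhood within $W$ if and only if they contribute the same set to $\tuplestop{z}$; equivalently, if and only if they are related by $\equiv_{\overline{W}}$. Choosing for each $X \in \tuplestop{z}$ a clause realizing it and sending $X$ to the $\equiv_{\overline{W}}$-class of that clause therefore yields a well-defined injection from $\tuplestop{z}$ into the set of equivalence classes of $\equiv_{\overline{W}}$. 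Consequently $\Card{\tuplestop{z}} \leq \mathit{index}_{I(F)}(\overline{W}) \leq k$.

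The bound on $\tuplesbot{z}$ is obtained symmetrically, with the roles of the two sides exchanged: for a clause $C \in F_z \subseteq W$, bipartiteness again gives $N(C) \cap \overline{W} = \var(C) \cap \overline{\var_z}$, the set contributed by $C$ to $\tuplesbot{z}$, so distinct members of $\tuplesbot{z}$ arise from clauses lying in distinct $\equiv_W$-classes, and $\Card{\tuplesbot{z}} \leq \mathit{index}_{I(F)}(W) \leq k$. Combining the two inequalities gives $\max(\Card{\tuplestop{z}}, \Card{\tuplesbot{z}}) \leq k$, as required.

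I expect the only real care to be needed in the translation step of the second paragraph: one must use bipartiteness to argue that a clause's adjacencies across the cut are determined entirely by its variable set, and then observe that $\tuplestop{z}$ enumerates only those $\equiv_{\overline{W}}$-classes represented by clause vertices, a subset of all classes. This is why an injection, and hence an inequality, is exactly what the situation provides, rather than a bijection; attempting to prove equality would be both unnecessary and false in general.
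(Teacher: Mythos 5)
Your proof is correct and fills in exactly the argument the paper leaves implicit (the paper gives no proof, stating only that the lemma ``follows from the definition of a decomposition tree's index''): the members of $\tuplestop{z}$ and $\tuplesbot{z}$ are the neighbourhoods across the cut $(W,\overline{W})$ of clause vertices, so they inject into the equivalence classes of $\equiv_{\overline{W}}$ and $\equiv_{W}$ respectively, whose numbers are bounded by $k$. One tiny slip in your root case: $\tuplesbot{r}$ equals $\{\emptyset\}$ rather than $\emptyset$ when $F\neq\emptyset$ (every $C\in F_r$ contributes the set $\overline{\var_r}\cap\var(C)=\emptyset$), but its cardinality is still $1\leq k$, so the conclusion is unaffected.
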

Let $z \in V(T)$ and let $f$ be a function with domain $\tuplestop{z}$ that maps
every set $X$ to some projection $f(X) \in
\projset{\proj{\overline{F_z}}{X}}{X}$. We denote the set of such functions by
$\outfunctions{z}$. Symmetrically, we let $\infunctions{z}$ denote the
set of functions~$g$ that map every set $Y \in \tuplesbot{z}$ to some
projection $g(Y) \in \projset{\proj{F_z}{Y}}{Y}$.
\begin{lemma}\label{lemprojbounded}
  For every $z \in V(T)$, $\Card{\outfunctions{z}} \leq (m+1)^{k}$ as well as
  $\Card{\infunctions{z}} \leq (m+1)^k$.
\end{lemma}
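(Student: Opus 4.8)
The plan is to treat each of the two sets of functions as a product of independent choices, and then bound the number of choices per coordinate and the number of coordinates separately, invoking the two results already at hand.

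First I would observe that an element $f \in \outfunctions{z}$ is nothing more than a choice, for each $X \in \tuplestop{z}$, of a projection $f(X) \in \projset{\proj{\overline{F_z}}{X}}{X}$, and that these choices are made independently across the sets $X$ in the domain. Hence the number of such functions factorizes as
\[
  \Card{\outfunctions{z}} = \prod_{X \in \tuplestop{z}} \Card{\projset{\proj{\overline{F_z}}{X}}{X}}.
\]
I would then bound each factor. Since $\overline{F_z} = F \setminus F_z \subseteq F$, the formula $\overline{F_z}$ has at most $m$ clauses, so Proposition~\ref{propprojbounded} applies and gives $\Card{\projset{\proj{\overline{F_z}}{X}}{X}} \leq m+1$ for every $X$. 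The number of factors is bounded by $\Card{\tuplestop{z}} \leq k$ via Lemma~\ref{lemxsbounded}. Combining these two bounds yields $\Card{\outfunctions{z}} \leq (m+1)^k$.

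The bound on $\Card{\infunctions{z}}$ then follows by an entirely symmetric argument: an element of $\infunctions{z}$ is an independent choice of $g(Y) \in \projset{\proj{F_z}{Y}}{Y}$ for each $Y \in \tuplesbot{z}$; since $F_z \subseteq F$ also has at most $m$ clauses, each factor is again at most $m+1$ by Proposition~\ref{propprojbounded}, and $\Card{\tuplesbot{z}} \leq k$ by Lemma~\ref{lemxsbounded}, giving $\Card{\infunctions{z}} \leq (m+1)^k$.

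I do not expect any genuine obstacle here, as the statement is a direct counting consequence of the two cited results. The only point worth a moment's care is verifying that Proposition~\ref{propprojbounded} is applicable in each case, i.e.\ that both $\overline{F_z}$ and $F_z$ are subsets of $F$ and therefore contain at most $m$ clauses, so that the per\hy coordinate bound is exactly $m+1$ and not something larger; everything else is a routine product estimate.
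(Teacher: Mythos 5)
Your proposal is correct and follows essentially the same route as the paper's own proof: bound each coordinate by $m+1$ via Proposition~\ref{propprojbounded} and the number of coordinates by $k$ via Lemma~\ref{lemxsbounded}, then take the product. The paper states this more tersely, but the argument is identical.
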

\begin{proof}
  By Proposition~\ref{propprojbounded} that the cardinality of
  $\projset{\proj{\overline{F_z}}{X}}{X}$ is bounded by $m + 1$ for every $X
  \in \tuplestop{z}$. In combination with Lemma~\ref{lemxsbounded} this yields
  $\Card{\outfunctions{z}} \leq (m+1)^{k}$. The proof of
  $\Card{\infunctions{z}} \leq (m+1)^{k}$ is symmetric.
\end{proof}
Let $\mathsf{union}(f)$ denote $\bigcup_{X \in \mathsf{dom}(f)}f(X)$,
where $\mathsf{dom}(f)$ is the domain of $f$. We define the set of
\emph{restricted} shapes for $z \in V(T)$ as follows.
\begin{align*}
  \rshapes{z} = \SB (\mathit{out}, \mathit{in}) \in \shapes{z} \SM &\exists f \in
  \outfunctions{z} \mtext{ s.t. } \mathit{out} = \mathsf{union}(f) \\
  \wedge  &\exists g \in \infunctions{z} \mtext{ s.t. } \mathit{in} = \mathsf{union}(g) \SE
\end{align*}
Every pair $(f, g) \in \outfunctions{z} \times \infunctions{z}$ uniquely
determines a shape in $\rshapes{z}$. Accordingly, Lemma~\ref{lemprojbounded}
allows us to bound the cardinality of $\rshapes{z}$ as follows.
\begin{corollary}\label{correstrictedbounded}
  For any $z \in V(T)$, $\Card{\rshapes{z}} \leq (m+1)^{2k}$.
\end{corollary}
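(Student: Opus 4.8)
The plan is to bound $\Card{\rshapes{z}}$ by exhibiting a surjection onto it from the product $\outfunctions{z} \times \infunctions{z}$ and then applying Lemma~\ref{lemprojbounded}. Concretely, I would define the map
\[
\Phi \colon \outfunctions{z} \times \infunctions{z} \to \rshapes{z}, \qquad \Phi(f,g) = (\mathsf{union}(f),\, \mathsf{union}(g)),
\]
and carry out two checks: that $\Phi$ is well-defined (its image really lands in $\rshapes{z}$) and that it is onto.

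For well-definedness, I would observe that each value $f(X)$ of a function $f \in \outfunctions{z}$ lies in $\projset{\proj{\overline{F_z}}{X}}{X}$ and is hence a set of clauses drawn from $\overline{F_z}$; taking the union over $X \in \tuplestop{z}$ yields $\mathsf{union}(f) \subseteq \overline{F_z}$. The symmetric argument gives $\mathsf{union}(g) \subseteq F_z$ for $g \in \infunctions{z}$. Since Definition~\ref{defshape} imposes exactly these two containments for a pair to qualify as a shape, $(\mathsf{union}(f), \mathsf{union}(g)) \in \shapes{z}$, and the defining condition of $\rshapes{z}$ (existence of witnesses $f$ and $g$) is met by construction, so the image indeed lies in $\rshapes{z}$.

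Surjectivity is then immediate by unwinding the definition of $\rshapes{z}$: any $(\mathit{out}, \mathit{in}) \in \rshapes{z}$ comes equipped with witnesses $f \in \outfunctions{z}$ and $g \in \infunctions{z}$ satisfying $\mathit{out} = \mathsf{union}(f)$ and $\mathit{in} = \mathsf{union}(g)$, so $(\mathit{out}, \mathit{in}) = \Phi(f,g)$. A surjection cannot increase cardinality, so $\Card{\rshapes{z}} \leq \Card{\outfunctions{z}} \cdot \Card{\infunctions{z}} \leq (m+1)^k \cdot (m+1)^k = (m+1)^{2k}$ by Lemma~\ref{lemprojbounded}. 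This is a bookkeeping step rather than a substantial argument, so I do not expect a real obstacle; the one pitfall I would avoid is trying to establish injectivity of $\Phi$, which need not hold — distinct function pairs can share the same pair of unions — and is in any case unnecessary for an upper bound.
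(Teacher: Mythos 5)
Your proposal is correct and matches the paper's argument: the paper likewise observes that every pair $(f,g) \in \outfunctions{z} \times \infunctions{z}$ determines a shape $(\mathsf{union}(f), \mathsf{union}(g))$ in $\rshapes{z}$, that every restricted shape arises this way by definition, and then applies Lemma~\ref{lemprojbounded} to bound the product. Your version merely spells out the well-definedness and surjectivity checks that the paper leaves implicit, and your remark that injectivity is neither needed nor guaranteed is apt.
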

\begin{lemma}\label{lemrestrictedproper}
  Let $z \in V(T)$ and let $s \in \shapes{z}$ be proper. Then $s \in
  \rshapes{z}$.
\end{lemma}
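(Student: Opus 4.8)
The plan is to exploit the witness assignments guaranteed by properness and convert them into the functions $f$ and $g$ demanded by the definition of $\rshapes{z}$. Write $s = (\shapeout{z}, \shapein{z})$. Since $s$ is proper, there is an assignment $\sigma \in 2^{\var_z}$ with $\overline{F_z}(\sigma) = \shapeout{z}$ and an assignment $\rho \in 2^{\overline{\var_z}}$ with $F_z(\rho) = \shapein{z}$. The guiding intuition is that $\sigma$ interacts with a clause $C \in \overline{F_z}$ only through the variables of $C$ lying in $\var_z$, so the global projection $\shapeout{z}$ should be recoverable from local projections indexed by the sets in $\tuplestop{z}$; the same holds symmetrically for $\rho$ and $\shapein{z}$ via $\tuplesbot{z}$.

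Concretely, I would define $f$ on $\tuplestop{z}$ by $f(X) = (\proj{\overline{F_z}}{X})(\proj{\sigma}{X})$ and $g$ on $\tuplesbot{z}$ by $g(Y) = (\proj{F_z}{Y})(\proj{\rho}{Y})$. Each value $f(X)$ is the projection of $\proj{\overline{F_z}}{X}$ under the restriction $\proj{\sigma}{X} \in 2^X$, hence lies in $\projset{\proj{\overline{F_z}}{X}}{X}$, so $f \in \outfunctions{z}$; symmetrically $g \in \infunctions{z}$. It then remains to verify the two identities $\mathsf{union}(f) = \shapeout{z}$ and $\mathsf{union}(g) = \shapein{z}$, after which membership $s \in \rshapes{z}$ follows directly from the definition, since $s$ is already a shape.

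The heart of the argument, and the step I expect to require the most care, is the identity $\mathsf{union}(f) = \overline{F_z}(\sigma)$, which I would prove by clause\hy by\hy clause localization and a double inclusion. For the forward inclusion, take $C \in \overline{F_z}(\sigma)$; a literal $\ell \in C$ witnessing satisfaction has $\var(\ell) \in \var_z$, so setting $X_C = \var_z \cap \var(C)$ we get $\var(\ell) \in X_C$, and hence $\proj{\sigma}{X_C}$ already satisfies $C$. Since $X_C \subseteq \var(C)$ we also have $C \in \proj{\overline{F_z}}{X_C}$, and $X_C \in \tuplestop{z}$ by definition, so $C \in f(X_C) \subseteq \mathsf{union}(f)$. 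For the reverse inclusion, any $C \in f(X)$ is satisfied by $\proj{\sigma}{X}$, a restriction of $\sigma$, and therefore by $\sigma$ itself; as $C \in \proj{\overline{F_z}}{X} \subseteq \overline{F_z}$, this yields $C \in \overline{F_z}(\sigma)$.

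The identity $\mathsf{union}(g) = \shapein{z}$ is then established by the symmetric argument, interchanging the roles of $\var_z$ and $\overline{\var_z}$, of $\overline{F_z}$ and $F_z$, of $\sigma$ and $\rho$, and of $\tuplestop{z}$ and $\tuplesbot{z}$. Combining the two identities gives $s \in \rshapes{z}$, as required. I do not anticipate a genuine obstacle beyond carefully tracking the basic fact that a restriction of an assignment satisfies exactly those clauses whose satisfying literal lives on the restricted variables, which is what makes the local projections reassemble into the global one.
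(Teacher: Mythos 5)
Your proposal is correct and follows essentially the same route as the paper's own proof: the same witnesses $\sigma$ and $\rho$ from properness, the same definitions $f(X) = \proj{\overline{F_z}}{X}(\proj{\sigma}{X})$ and $g(Y) = \proj{F_z}{Y}(\proj{\rho}{Y})$, and the same clause-by-clause double inclusion to show $\mathsf{union}(f) = \shapeout{z}$ and $\mathsf{union}(g) = \shapein{z}$. Your forward-inclusion step is, if anything, slightly more explicit than the paper's about why the satisfying literal's variable lands in $X_C = \var_z \cap \var(C)$.
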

\newcommand{\pflemrestrictedproper}[0]{
\begin{proof}
  Let $s = (\shapeout{},\shapein{})$. We show that there are functions $f \in
  \outfunctions{z}$ and $g \in \infunctions{z}$ such that $\shapeout{} =
  \mathsf{union}(f)$ and $\shapein{} = \mathsf{union}(g)$. Because $s$ is
  proper we have $\shapeout{} \in \projset{\overline{F_z}}{\var_z}$ and
  $\shapein{} \in \projset{F_z}{\overline{\var_z}}$, so there must be truth
  assignments $\sigma \in 2^{\var_z}$ and $\tau \in 2^{\overline{\var_z}}$
  such that $\shapeout{} = \overline{F_z}(\sigma)$ and $\shapein{} =
  F_z(\tau)$. We define $f$ as follows. For each $X \in \tuplestop{z}$ we let
  $f(X) = \proj{\overline{F_z}}{X}(\proj{\sigma}{X})$. The assignment $\sigma$
  is defined on $X \subseteq \var_z$, so $\proj{\sigma}{X} \in 2^X$ and $f(X)
  \in \projset{\proj{\overline{F_z}}{X}}{X}$. That is, $f \in
  \outfunctions{z}$. Symmetrically, we let $g(X) =
  \proj{F_z}{X}(\proj{\tau}{X})$ for each $X \in \tuplesbot{z}$. Since $\tau$
  is defined on $X \subseteq \overline{\var_z}$ we have $\proj{\tau}{X} \in
  2^{X}$ and $g(X) \in \projset{\proj{F_z}{X}}{X}$, so $g \in \infunctions{z}$.

  Pick an arbitrary $C \in \overline{F_z}$ and let $X = \var(C) \cap
  \var_z$. We show that $C \in \shapeout{}$ if and only if $C \in
  \mathsf{union}(f)$. Suppose $C \in \shapeout{} =
  \overline{F_z}(\sigma)$. The assignment $\sigma$ has domain $var_z$,
  so $\proj{\sigma}{X}$ satisfies $C$ because $\sigma$ does. That is,
  $C \in \overline{F_z}(\proj{\sigma}{X})$. By choice of $X$ we have
  $C \in \proj{F_z}{X}$, so $C \in \overline{F_z}(\proj{\sigma}{X})
  \cap \proj{F_z}{X}$. Since $\proj{F_z}{X} \subseteq F_z$ we get
  $\overline{F_z}(\proj{\sigma}{X}) \cap \proj{F_z}{X} =
  \proj{\overline{F_z}}{X}(\proj{\sigma}{X})$. So $C \in
  \proj{\overline{F_z}}{X}(\proj{\sigma}{X}) = f(X)$ and thus $C \in
  \mathsf{union}(f)$. For the converse direction, suppose $C \in
  \mathsf{union}(f)$. That is, $C \in f(Y) =
  \proj{\overline{F_z}}{Y}(\proj{\sigma}{Y})$ for some $Y \in
  \tuplestop{z}$. Then in particular $C \in \overline{F_z}(\sigma) =
  \shapeout{}$. We conclude that $\mathsf{union}(f) =
  \shapeout{}$. The proof of $\mathsf{union}(g) = \shapein{}$ is
  symmetric.
\end{proof}}
\longversion{\pflemrestrictedproper}
This shows that if we can determine the values $n_z(s)$ for every $z \in V(T)$
and $s \in \rshapes{z}$, we can determine the values $n_z(s')$ for every
proper shape $s' \in~\shapes{z}$. More specifically, as long as we can
determine lower bounds for $n_z(s)$ for every $s \in \rshapes{z}$ and the
exact values of $n_z(s)$ for proper $s$, we can compute the correct values for
all proper shapes for every tree node.
\begin{definition}
  For $z \in V(T)$, a \emph{lower bounding function} (for $z$) associates with
  each $s \in \rshapes{z}$ a value $l_z(s)$ such that $l_z(s) \leq n_z(s)$ and
  $l_z(s) = n_z(s)$ if $s$ is proper.
\end{definition}
Let $x,y,z \in V(T)$ such that $x$ and $y$ are the children of $z$. For each $s
\in \shapes{z}$ we write $\rgenerators{z}{s} = \generators{z}{s} \cap
(\rshapes{x} \times \rshapes{y})$.
\begin{lemma}\label{lemdpequality}
  Let $x,y,z \in V(T)$ such that $x$ and $y$ are the children of $z$. Let $l_x$
  and $l_y$ be lower bounding functions for $x$ and $y$. Let $l_z$ be the
  function defined as follows. For each $s \in \rshapes{z}$, we let
  \begin{align}
     l_z(s) = \sum_{\substack{(s_x, s_y) \in \rgenerators{z}{s}}} l_x(s_x)\:
     l_y(s_y).
   \end{align}
   Then $l_z$ is a lower bounding function for $z$.
 \end{lemma}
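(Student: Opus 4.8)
The plan is to prove the two defining properties of a lower bounding function separately: the inequality $l_z(s) \le n_z(s)$ for every $s \in \rshapes{z}$, and the equality $l_z(s) = n_z(s)$ whenever $s$ is proper. Throughout I would use that lower bounding functions are non-negative; this holds for the functions produced by the dynamic program, since $l_z$ is itself defined as a sum of products of the (non-negative) children's values and the base values are genuine counts. I would record this explicitly at the outset, as it is the property that makes the term-by-term comparisons below go through.

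For the inequality I would start from Lemma~\ref{lem:counting}, which expresses $n_z(s) = \sum_{(s_x,s_y) \in \generators{z}{s}} n_x(s_x)\,n_y(s_y)$. Since by definition $\rgenerators{z}{s} = \generators{z}{s} \cap (\rshapes{x} \times \rshapes{y}) \subseteq \generators{z}{s}$, the sum defining $l_z(s)$ ranges over a subset of the pairs occurring in this expression. For each such pair I would invoke the lower-bound property $0 \le l_x(s_x) \le n_x(s_x)$ and $0 \le l_y(s_y) \le n_y(s_y)$ to conclude $l_x(s_x)\,l_y(s_y) \le n_x(s_x)\,n_y(s_y)$; and since every dropped term $n_x(s_x)\,n_y(s_y)$ is a product of cardinalities, hence non-negative, discarding it only decreases the right-hand side. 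Chaining these two comparisons yields $l_z(s) \le n_z(s)$.

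For the equality when $s$ is proper I would instead start from Corollary~\ref{corcountingproper}, which restricts the counting identity to the set $P$ of proper generating pairs: $n_z(s) = \sum_{(s_x,s_y) \in P} n_x(s_x)\,n_y(s_y)$. By Lemma~\ref{lemrestrictedproper} every proper shape is restricted, so $P \subseteq \rgenerators{z}{s}$, which means the defining sum for $l_z(s)$ contains all the terms appearing in this identity. On $P$ both components are proper, so the lower bounding property gives $l_x(s_x) = n_x(s_x)$ and $l_y(s_y) = n_y(s_y)$, and these terms match exactly. It then remains to show that the extra terms in $\rgenerators{z}{s} \setminus P$ vanish: such a pair generates the proper shape $s$ but has a non-proper component, so by the contrapositive of Corollary~\ref{corpropergen} one of $N_x(s_x)$, $N_y(s_y)$ is empty, i.e.\ $n_x(s_x) = 0$ or $n_y(s_y) = 0$; combined with non-negativity and the lower bound, the corresponding $l$-value is forced to $0$ and the product contributes nothing. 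Hence $l_z(s) = \sum_{(s_x,s_y)\in P} n_x(s_x)\,n_y(s_y) = n_z(s)$.

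The main obstacle is the treatment of the restricted-but-not-proper generating pairs in the equality part: these are exactly the pairs that could make $l_z$ overcount on a proper shape, and ruling them out depends on coupling Corollary~\ref{corpropergen} (their child-shapes have empty assignment sets) with the non-negativity of lower bounding functions (so that a lower bound sitting below a zero count is itself zero). I would therefore make sure non-negativity is available as a standing property of lower bounding functions before carrying out these comparisons.
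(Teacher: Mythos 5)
Your proof is correct and follows essentially the same route as the paper's: the inequality $l_z(s)\le n_z(s)$ via Lemma~\ref{lem:counting} and $\rgenerators{z}{s}\subseteq\generators{z}{s}$, and the proper case via Corollary~\ref{corcountingproper} together with Lemma~\ref{lemrestrictedproper}. Your explicit insistence on non-negativity of lower bounding functions is a point the paper leaves implicit (the definition does not demand it, but the functions actually constructed are non-negative, and the term-by-term comparisons do need it), and your use of Corollary~\ref{corpropergen} to show that the restricted-but-not-proper generating pairs contribute exactly zero is a slightly more direct way to conclude than the paper's squeeze between $l_z(s)\le n_z(s)$ and $l_z(s)\ge n_z(s)$.
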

\newcommand{\pflemdpequality}[0]{
\begin{proof}
  The inequality $l_z(s) \leq n_z(s)$ follows from $\rshapes{x} \subseteq
  \shapes{x}$ and $\rshapes{y} \subseteq \shapes{y}$, in combination with
  equality~(\ref{equ:lemcounting}) and the fact that $l_x$ and $l_y$ are lower
  bounding functions for $x$ and $y$. By Lemma~\ref{lemrestrictedproper} the
  set $\rgenerators{z}{s}$ contains all pairs $(s_x,s_y) \in
  \generators{z}{s}$ such that $s_x$ and $s_y$ are proper. It follows from
  Corollary~\ref{corcountingproper} and $l_x(s_x) = n_x(s_x)$, $l_y(s_y) =
  n_y(s_y)$ for proper $s_x,s_y$ that $l_z (s) \geq n_z(s)$ and thus $l_z(s) =
  n_z(s)$ for proper $s$. We conclude that $l_z$ is a lower bounding function
  for $z$.
\end{proof}}
\longversion{\pflemdpequality}

\longversion{
\begin{lemma}\label{lemcomputeshapes}
  There is a polynomial $p$ such that for any $z \in V(T)$, the set
  $\rshapes{z}$ can be computed in time $m^{2k} p(l)$, where $l$ is
  the length of $F$.
\end{lemma}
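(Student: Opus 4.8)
The plan is to enumerate the restricted shapes directly from the data that defines them. By the definition of $\rshapes{z}$, every $s = (\mathit{out}, \mathit{in}) \in \rshapes{z}$ has the form $(\mathsf{union}(f), \mathsf{union}(g))$ for some $f \in \outfunctions{z}$ and $g \in \infunctions{z}$, and conversely every such pair yields a legitimate shape: each value $f(X)$ is drawn from $\projset{\proj{\overline{F_z}}{X}}{X}$ and is therefore a subset of $\proj{\overline{F_z}}{X} \subseteq \overline{F_z}$, so $\mathsf{union}(f) \subseteq \overline{F_z}$, and symmetrically $\mathsf{union}(g) \subseteq F_z$. This is exactly what Definition~\ref{defshape} requires of a shape. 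Hence it suffices to iterate over all pairs $(f,g) \in \outfunctions{z} \times \infunctions{z}$, form $(\mathsf{union}(f), \mathsf{union}(g))$, and collect the results into a set.

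Concretely, I would first determine $F_z$ and $\var_z$ (and the complements $\overline{F_z}$, $\overline{\var_z}$) by computing the leaf set $L(T_z)$, which costs time polynomial in the size of $I(F)$. A single scan over all clauses then yields $\tuplestop{z}$ and $\tuplesbot{z}$: for each $C \in \overline{F_z}$ we record $\var_z \cap \var(C)$, and for each $C \in F_z$ we record $\overline{\var_z} \cap \var(C)$. By Lemma~\ref{lemxsbounded} each family has at most $k$ distinct members. Next, for every $X \in \tuplestop{z}$ I would invoke Proposition~\ref{propprojbounded}, applied to $\overline{F_z}$ in place of $F$, to compute $\projset{\proj{\overline{F_z}}{X}}{X}$; this set has at most $m+1$ elements and is computable in time polynomial in $l$. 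The analogous computation over $\tuplesbot{z}$ produces the sets $\projset{\proj{F_z}{Y}}{Y}$.

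With these building blocks in hand, enumerating $\outfunctions{z}$ amounts to choosing, independently for each of the at most $k$ sets in $\tuplestop{z}$, one of the at most $m+1$ projections; by Lemma~\ref{lemprojbounded} there are at most $(m+1)^k$ such functions, and symmetrically at most $(m+1)^k$ functions in $\infunctions{z}$. For each of the at most $(m+1)^{2k}$ pairs $(f,g)$ I would compute $\mathsf{union}(f)$ and $\mathsf{union}(g)$ -- each a union of at most $k$ clause sets of size at most $m$, hence polynomial work in the number of clauses and so in $l$ -- and emit the shape $(\mathsf{union}(f), \mathsf{union}(g))$. Removing duplicate shapes by hashing a canonical bit-vector encoding of the two clause sets costs only polynomial time per candidate, so the whole enumeration runs in time $(m+1)^{2k}\,p'(l)$ for some polynomial $p'$. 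Since $(m+1)^{2k} \leq 2^{2k} m^{2k}$ and $k$ is fixed, this is bounded by $m^{2k} p(l)$ for a suitable polynomial $p$, consistent with the cardinality bound of Corollary~\ref{correstrictedbounded}.

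The only point that needs genuine care is the two-way correctness check sketched in the first paragraph: that the enumerated collection equals $\rshapes{z}$ exactly, neither omitting a restricted shape (completeness is immediate from the definition of $\rshapes{z}$) nor emitting a pair $(\mathit{out}, \mathit{in})$ that fails to be a shape (soundness, which follows from $\mathsf{union}(f) \subseteq \overline{F_z}$ and $\mathsf{union}(g) \subseteq F_z$). Everything else is bookkeeping, since the substantive bounds -- the $m+1$ bound on projection counts together with their polynomial-time computability, and the $\Card{\tuplestop{z}}, \Card{\tuplesbot{z}} \le k$ bound -- were already established in Proposition~\ref{propprojbounded} and Lemmas~\ref{lemxsbounded}--\ref{lemprojbounded}. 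Thus the present lemma is essentially an assembly of those bounds, and the main residual risk is an off-by-a-constant in the exponent or an overlooked cost hidden in the deduplication step.
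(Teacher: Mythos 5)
Your proposal is correct and follows essentially the same route as the paper's proof: compute $\tuplestop{z}$ and $\tuplesbot{z}$ by a scan over the clauses, compute each $\projset{\proj{\overline{F_z}}{X}}{X}$ and $\projset{\proj{F_z}{Y}}{Y}$ via Proposition~\ref{propprojbounded}, and enumerate the at most $(m+1)^{2k}$ pairs $(f,g)$, forming $(\mathsf{union}(f),\mathsf{union}(g))$ for each. The extra remarks on soundness of the emitted pairs and on deduplication are fine but not needed beyond what the paper's own argument records.
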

\newcommand{\pflemcomputeshapes}[0]{
\begin{proof}
  To compute $\rshapes{z}$, we compute all pairs
  $(\mathsf{union}(f),\mathsf{union}(g))$ for $(f,g) \in \outfunctions{z}
  \times \infunctions{z}$. To compute the set $\tuplestop{z}$, we run through
  all clauses $C \in \overline{F_z}$ and determine $\var(C) \cap \var_z$. This
  can be done in time polynomial in $l$, and the same holds for the set
  $\tuplesbot{z}$. A function $f \in \outfunctions{z}$ maps each $X \in
  \tuplestop{z}$ to a set $f(X) \in \projset{\proj{\overline{F_z}}{X}}{X}$. By
  Proposition~\ref{propprojbounded} the set
  $\projset{\proj{\overline{F_z}}{X}}{X}$ can by computed in time polynomial
  in $l$ for each $X \in \tuplestop{z}$. Going through all possible pairs
  $(f,g) \in \outfunctions{z} \times \infunctions{z}$ amounts to going through
  all possible combinations of choices of $f(X) \in
  \projset{\proj{\overline{F_z}}{X}}{X}$ for each $X \in \tuplestop{z}$ and
  $g(X') \in \projset{\proj{F_z}{X'}}{X'}$ for each $X' \in \tuplesbot{z}$, of
  which there are at most $(m+1)^{2k}$. For each such pair $(f,g)$ we compute
  the sets $\mathsf{union}(f)$ and $\mathsf{union}(g)$, which can be done in
  time polynomial in $l$.
\end{proof}}}
\longversion{\pflemcomputeshapes}
\longversion{
\begin{lemma}\label{lemgeneratecheck}
  Let $x,y,z \in V(T)$ such that $x$ and $y$ are the children of $z$. Let
  $s_x \in \shapes{x}$, $s_y \in \shapes{y}$, and $s_z \in
  \shapes{z}$. It can be decided in time $O(l^2)$ whether $s_x$ and
  $s_y$ generate $s_z$, where $l$ is the length of $F$.
\end{lemma}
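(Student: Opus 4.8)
The plan is to reduce the test to verifying the three set identities of Definition~\ref{defgen} and to observe that all of the objects involved are subsets of a single fixed universe, namely the set $F$ of clauses, with $\Card{F}=m$. Concretely, I would represent each set of clauses occurring in the conditions---the shape components $\shapeout{x},\shapein{x},\shapeout{y},\shapein{y},\shapeout{z},\shapein{z}$ together with $F_x$, $F_y$, $F_z$, and $\overline{F_z}=F\setminus F_z$---by its characteristic vector over $F$, i.e.\ a Boolean array of length $m$. Under this representation, union, intersection, set difference, and equality testing are all linear in the size of the universe, so each is an $O(m)$ operation.

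First I would fix the representations. The shape components are, by definition, subsets of $F$ that are supplied as part of the input, so their characteristic vectors can be read off in time $O(m)$. The formula parts are determined by the decomposition tree: a clause $C$ lies in $F_z$ precisely when the leaf $\delta^{-1}(C)$ belongs to the subtree $T_z$, and symmetrically for $F_x$ and $F_y$; hence their characteristic vectors can be computed by a single pass that tests, for each clause, subtree membership of its leaf. Reading the clauses and performing this pass costs $O(l)$. Once the vectors are in place, each of conditions~\ref{cond:gen1}, \ref{cond:gen2}, and \ref{cond:gen3} has the shape $A=(B\cup C)\cap D$ for vectors $A,B,C,D$ among those above, so computing $B\cup C$, intersecting with $D$, and comparing the result to $A$ each take $O(m)$ time. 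As there is only a fixed number (three) of conditions, the entire check runs in $O(m)$ time after the $O(l)$ setup.

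Combining the two bounds and using $m\le l$ (every clause being non-empty), both preprocessing and checking stay within $O(l)$, which is comfortably inside the claimed $O(l^2)$; the looser bound even absorbs a coarser accounting in which set operations are carried out directly on clause contents rather than on precomputed indices. The only step that is not a completely routine Boolean manipulation is the computation of $F_x$, $F_y$, and $F_z$ from the decomposition tree, and I expect that to be the main point worth spelling out: one must make explicit that subtree membership of the leaf $\delta^{-1}(C)$ can be resolved within the time budget. Everything else is a constant number of set operations over an $m$-element universe, and---crucially for the overall algorithm---none of it depends on the index $k$.
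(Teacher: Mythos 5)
Your proposal is correct and follows essentially the same route as the paper, which simply observes that conditions~\ref{cond:gen1}--\ref{cond:gen3} can be checked directly by set operations on clause sets of total length at most $l$, giving the quadratic bound. Your extra care with characteristic vectors over $F$ merely sharpens this to near-linear time, which is subsumed by the claimed $O(l^2)$.
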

\begin{proof}
  We only have to check conditions~\ref{cond:gen1} to \ref{cond:gen3}, which
  can easily be done in time quadratic in $l$ since the sets of clauses
  involved have length at most $l$.
\end{proof}}
\longversion{
\begin{lemma}\label{lemleaf}
  For any leaf node $z \in V(T)$ a lower bounding function for $z$ can be
  computed in time $O(l)$, where $l$ is the length of $F$.
\end{lemma}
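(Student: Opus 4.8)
The plan is to exploit that a leaf $z$ has at most one variable associated with it, so that $2^{\var_z}$ contains at most two assignments and $n_z$ is supported on at most two shapes; these, together with their counts, can be read off by scanning $F$ a constant number of times.

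First I would distinguish two cases according to the type of the vertex $\delta(z)$ of $I(F)$. If $\delta(z)$ is a clause $C$, then $\var_z = \emptyset$ and $F_z = \{C\}$, so the only assignment is the empty assignment $\epsilon$; it satisfies no clause, so condition~\ref{cond:shape1} forces $\shapeout{z} = \emptyset$, while condition~\ref{cond:shape2} together with $\shapein{z} \subseteq F_z$ forces $\shapein{z} = \{C\}$, giving the single shape $(\emptyset, \{C\})$. If $\delta(z)$ is a variable $x$, then $\var_z = \{x\}$ and $F_z = \emptyset$, so $\shapein{z} = \emptyset$ is forced and each of the two assignments $\tau \in 2^{\{x\}}$ has the unique shape $(\overline{F_z}(\tau), \emptyset)$. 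In both cases every assignment has exactly one shape, so $n_z(s)$ equals the number of these (at most two) assignments whose shape is $s$, and $n_z(s) = 0$ for every other shape.

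Next I would compute the at most two shapes occurring above, each with its multiplicity, by testing for every clause whether it is satisfied by the assignment under consideration; since the total length of the clauses is $l$, this costs $O(l)$. I then define $l_z$ by setting $l_z(s) = n_z(s)$ on these shapes and $l_z(s) = 0$ on all remaining restricted shapes, storing only the (at most two) nonzero entries explicitly. Each shape produced is proper -- for a variable leaf $\overline{F_z}(\tau) \in \projset{\overline{F_z}}{\var_z}$ and $\emptyset \in \projset{F_z}{\overline{\var_z}}$ hold trivially, and symmetrically for a clause leaf -- so by Lemma~\ref{lemrestrictedproper} it lies in $\rshapes{z}$, and the sparse table genuinely specifies a function on $\rshapes{z}$. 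Since $l_z = n_z$ then holds throughout $\rshapes{z}$ (all unlisted shapes having $n_z = 0$), $l_z$ is a lower bounding function, in fact with equality everywhere rather than merely for proper shapes.

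The one genuine obstacle is the running time. By Corollary~\ref{correstrictedbounded} the set $\rshapes{z}$ may contain up to $(m+1)^{2k}$ shapes, so an entry-by-entry tabulation of $l_z$ is out of the question. The key observation I would emphasize is that at a leaf this is unnecessary: $n_z$, and hence the nontrivial part of $l_z$, is supported on at most two shapes obtained directly from the at most two assignments, while every other restricted shape carries the default value $0$. A minor wrinkle is the degenerate case in which $C$ is the empty clause: there the shape $(\emptyset, \{C\})$ fails to be proper and falls outside $\rshapes{z}$, but then every shape that does lie in $\rshapes{z}$ has $n_z = 0$, so the all-zero function is still correct.
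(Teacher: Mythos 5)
Your proof is correct and follows essentially the same route as the paper's: a case split on whether the leaf is a clause or a variable vertex, the observation that each of the at most two assignments in $2^{\var_z}$ has a uniquely determined shape, and an $O(l)$ scan to read off those shapes and their counts. The only real difference is that the paper explicitly enumerates $\rshapes{z}$ at a leaf and shows it has at most two elements (so the entry\hy by\hy entry tabulation you worry about is not actually an obstacle there), whereas you keep $l_z$ as a sparse table with default value $0$, justified via properness and Lemma~\ref{lemrestrictedproper}; your explicit handling of the empty\hy clause degeneracy is a detail the paper's own proof glosses over.
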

\newcommand{\pflemleaf}[0]{
\begin{proof}
  Every leaf $z \in V(T)$ is either associated with a clause $C \in F$ or a
  variable $v \in \var(F)$. In the first case, $\var_z = \emptyset$ and so
  $\tuplestop{z} = \emptyset$ if $\overline{F_z} = \emptyset$ or
  $\tuplestop{z} = \{\emptyset\}$. It follows that the set $\outfunctions{z}$
  only contains the empty function or the function $f$ with domain
  $\{\emptyset\}$ such that $f(\emptyset) = \emptyset$. For the set
  $\tuplesbot{z}$ we get $\tuplesbot{z} = \{\var(C)\}$ for the unique clause
  $C \in F_z$. Since $\proj{F_z}{\var(C)} = \{C\}$ we have
  $\projset{\proj{F_z}{\var(C)}}{\var(C)} = \{ \{C\}, \emptyset\}$ and thus
  $\infunctions{z} = \{g, g'\}$, where $g$ is the function with domain
  $\{\var(C)\}$ such that $g(\var(C)) = \{C\}$ and $g'$ is the function with
  domain $\{\var(C)\}$ such that $g'(\var(C)) = \emptyset$. It follows that
  $\rshapes{z}$ only contains the shapes $(\emptyset, \emptyset)$ and
  $(\emptyset, \{C\})$. The set $\var_z$ is empty, so $2^{\var_z}$ contains
  only the empty assignment which does not satisfy any clause. Hence
  $n_z((\emptyset, \emptyset)) = 0$ and $n_z((\emptyset, \{C\})) = 1$.

  In the second case, $\var_z = \{v\}$ for some variable $v \in
  \var(F)$. Since $F_z = \emptyset$ we have $\tuplesbot{z} = \emptyset$ and so
  $\infunctions{z}$ only contains the empty function. The set $\tuplestop{z}$
  contains $\{v\}$, and the empty set if there is a clause $C \in F$ with $v
  \notin \var(C)$. We get $\projset{\proj{\overline{F_z}}{\{v\}}}{\{v\}} =
  \{F^{+}_v, F^{-}_v\}$, where $F^+_v$ is the set of clauses of $F$ with a
  positive occurrence of $v$, and $F^-_v$ is the set of clauses $F$ with a
  negative occurrence of $v$. Moreover,
  $\projset{\proj{\overline{F_z}}{\emptyset}}{\emptyset} = \{\emptyset\}$. It
  follows that $\rshapes{z} = \{ ( F^+_v, \emptyset), (F^-_v, \emptyset) \}$.
  The set $2^{\var_z}$ only contains the assignments $\tau_0$ with $\tau_0(v)
  = 0$ and $\tau_1$ with $\tau_1(v) = 1$, and $\overline{F_z}(\tau_0) = F^-_v$
  and $\overline{F_z}(\tau_1) = F^+_v$. This implies $n_z((F^+_v, \emptyset))
  = 1$ and $n_z((F^-_v,\emptyset)) = 1$.

  In either case the set $\rshapes{z}$ and the values $n_z(s)$ for each $s \in
  \rshapes{z}$ can be computed in time $O(l)$. These values trivially provide
  a lower bounding function for $z$.
\end{proof}}
\pflemleaf}

\longversion{
\begin{lemma}\label{leminner}
  There is a polynomial $p$ such that for any inner node $z \in V(T)$,
  a lower bounding function for $z$ can be computed in time $m^{6k}
  p(l)$, provided that lower bounding functions have already been computed for
  both children of $z$, where $l$ denotes the length of $F$.
\end{lemma}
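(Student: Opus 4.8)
The plan is to compute the lower bounding function $l_z$ prescribed by Lemma~\ref{lemdpequality} and to argue that evaluating its defining sum
\[
  l_z(s) = \sum_{(s_x, s_y) \in \rgenerators{z}{s}} l_x(s_x)\, l_y(s_y)
\]
for every $s \in \rshapes{z}$ stays within the claimed time bound. Correctness is immediate from Lemma~\ref{lemdpequality} once the sum is computed exactly, so the entire content of the lemma is the runtime accounting. First I would compute the three sets $\rshapes{z}$, $\rshapes{x}$ and $\rshapes{y}$ using Lemma~\ref{lemcomputeshapes}; each takes time $m^{2k} p_0(l)$ for some polynomial $p_0$, and by Corollary~\ref{correstrictedbounded} each has cardinality at most $(m+1)^{2k}$. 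The values $l_x(s_x)$ and $l_y(s_y)$ are available by hypothesis, since lower bounding functions for the children have already been computed.

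Next I would initialise $l_z(s) = 0$ for every $s \in \rshapes{z}$ and then iterate over all triples $(s, s_x, s_y) \in \rshapes{z} \times \rshapes{x} \times \rshapes{y}$. For each triple I would test, using Lemma~\ref{lemgeneratecheck} in time $O(l^2)$, whether $(s_x, s_y)$ generates $s$; if so, I would add the product $l_x(s_x)\, l_y(s_y)$ to the running value of $l_z(s)$. Since $\rgenerators{z}{s} = \generators{z}{s} \cap (\rshapes{x} \times \rshapes{y})$, running over all of $\rshapes{x} \times \rshapes{y}$ and retaining the generating pairs accumulates exactly the sum above, so the computed $l_z$ coincides with the function of Lemma~\ref{lemdpequality} and is therefore a lower bounding function for $z$. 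By Corollary~\ref{correstrictedbounded} the number of triples is at most $(m+1)^{2k} \cdot (m+1)^{2k} \cdot (m+1)^{2k} = (m+1)^{6k}$, and each triple is handled with one generation test and one multiply\hy add.

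What remains is to confirm that the arithmetic does not break the budget. The accumulated values are bounded by $n_z(s) \leq 2^{\Card{\var_z}}$, hence have $O(l)$ bits, so each multiplication and each addition costs time polynomial in $l$. Folding this into the per\hy triple cost gives an overall bound of $(m+1)^{6k}\, p_1(l)$ for a polynomial $p_1$, plus the $O(m^{2k} p_0(l))$ spent on the three shape sets, which is dominated by the former. Since $(m+1)^{6k} \leq 2^{6k} m^{6k}$ for $m \geq 1$ and $k$ is a fixed constant, the total is bounded by $m^{6k} p(l)$ for a suitable polynomial $p$, as required. I do not expect a genuine obstacle here: the only points that need care are to enumerate pairs from $\rshapes{x} \times \rshapes{y}$ rather than attempting to generate $\generators{z}{s}$ directly (which may involve non\hy restricted shapes outside the precomputed domains), and to observe that although the operands can be exponentially large as integers, they carry only polynomially many bits and so admit polynomial\hy time arithmetic.
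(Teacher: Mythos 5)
Your proposal is correct and follows essentially the same route as the paper's proof: initialise $l_z$ to zero, enumerate all triples in $\rshapes{x} \times \rshapes{y} \times \rshapes{z}$ (at most $(m+1)^{6k}$ by Corollary~\ref{correstrictedbounded}), test generation via Lemma~\ref{lemgeneratecheck}, accumulate products, and invoke Lemma~\ref{lemdpequality} for correctness. Your additional remarks on the bit-length of the accumulated integers and on absorbing the factor $(m+1)^{6k}/m^{6k}$ into the constant match the paper's (slightly terser) accounting.
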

\newcommand{\pfleminner}[0]{
\begin{proof}
  By Lemma~\ref{lemcomputeshapes}, there is a polynomial $q$
  (independent of $z$) such that the set $\rshapes{z}$ can be computed
  in time $O(m^{2k} q(l))$. Let $x$ and $y$ denote the children
  of $z$, and let $l_x$ and $l_y$ be lower bounding functions for $x$
  and $y$. We compute a lower bounding function $l_z$ for $z$ as
  follows. Initially, we set $l_z(s_z) = 0$ for all $s_z \in
  \rshapes{z}$. We then run through all triples of shapes $s_x \in
  \rshapes{x}$, $s_y \in \rshapes{y}$, and $s_z \in \rshapes{z}$ and
  check whether $s_x$ and $s_y$ generate $s_z$. If that is the case,
  we add $l_x(s_x)\: l_y(s_y)$ to $l_z(s_z)$.

  Correctness follows from Lemma~\ref{lemdpequality} and the fact that
  $l_x, l_y$ are lower bounding functions for $x$ and $y$. The bound
  on the runtime is obtained as follows. By
  Corollary~\ref{correstrictedbounded} there are at most
  $(m+1)^{6k}$ triples $(s_x, s_y, s_z)$
  of shapes that have to be considered. For each one, one can decide
  whether $s_x$ and $s_y$ generate $s_z$ in time $O(l^2)$ by
  Lemma~\ref{lemgeneratecheck}. Depending on the outcome of that
  decision we may have to multiply two integers $l_x(s_x)$ and
  $l_y(s_y)$, adding the result to $l_z(s_z)$. These values are
  bounded from above by $2^{\Card{\var(F)}} \leq 2^l$, so their binary
  representations have size $O(l)$ and these arithmetic operations can
  be carried out in time polynomial in $l$.
\end{proof}}
\pfleminner}

\longversion{
\begin{lemma}\label{lemfullalgorithm}
  There is a polynomial $p$ such that a lower bounding function for $z$
  can be computed for every $z \in V(T)$ in time $m^{6k} p(l)$, where
  $l$ is the length of $F$.
\end{lemma}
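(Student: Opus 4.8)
The plan is to compute lower bounding functions for all of $T$ by a single bottom\hy up sweep, relying entirely on the per\hy node bounds already established in Lemmas~\ref{lemleaf} and~\ref{leminner}. First I would bound the size of $T$. Since $(T,\delta)$ is a decomposition tree of $I(F)$, the bijection $\delta$ maps $L(T)$ onto the vertex set $\var(F)\cup F$ of $I(F)$, so $\Card{L(T)} = \Card{\var(F)} + m$. Every variable in $\var(F)$ occurs in at least one clause and hence contributes to the length, giving $\Card{\var(F)} \leq l$, while $m = \Card{F} = O(l)$; thus $\Card{L(T)} = O(l)$. As $T$ is a binary tree, $\Card{V(T)} \leq 2\Card{L(T)} - 1 = O(l)$, so in particular the number of leaves and the number of inner nodes are each $O(l)$.

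Next I would process $T$ in post\hy order, so that an inner node is visited only after both of its children. This scheduling guarantees that the precondition of Lemma~\ref{leminner}---that lower bounding functions have already been computed for both children---is met whenever an inner node is reached. For each leaf $z$, Lemma~\ref{lemleaf} produces a lower bounding function in time $O(l)$; summed over the $O(l)$ leaves this contributes $O(l^2)$. For each inner node $z$, Lemma~\ref{leminner} then yields a lower bounding function in time $m^{6k}\,q(l)$ for a fixed polynomial $q$ independent of $z$; summed over the $O(l)$ inner nodes this contributes $m^{6k}\cdot O(l\,q(l))$.

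Finally I would combine the two contributions into a single bound. Since $m^{6k} \geq 1$, the leaf cost $O(l^2)$ is absorbed by $m^{6k}\cdot O(l^2)$, so the total running time over all of $V(T)$ is bounded by $m^{6k}\,p(l)$ for a suitable polynomial $p$, which also subsumes the $O(l)$\hy time post\hy order traversal overhead. Correctness is immediate from the definition of a lower bounding function together with Lemmas~\ref{lemleaf} and~\ref{leminner}. I do not expect a genuine obstacle here: all the combinatorial content lives in the earlier lemmas, and the only points requiring care are the linear bound on $\Card{V(T)}$ and the observation that a post\hy order schedule satisfies the children\hy first precondition of Lemma~\ref{leminner} at every inner node.
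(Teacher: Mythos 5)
Your proposal is correct and follows essentially the same route as the paper's proof: bound the number of leaves and inner nodes of $T$ by $O(l)$, compute lower bounding functions for leaves via Lemma~\ref{lemleaf} in $O(l^2)$ total, and then proceed bottom-up over inner nodes via Lemma~\ref{leminner} for a total of $m^{6k}\cdot O(l\,q(l))$. The only difference is that you spell out the justification for $\Card{L(T)} = O(l)$ via the bijection $\delta$, which the paper leaves implicit.
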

\newcommand{\pflemfullalgorithm}[0]{
\begin{proof}
  By Lemma~\ref{lemleaf}, a lower bounding function for a leaf of $T$ can be
  computed in time $O(l)$. The number of leaves of $T$ is in $O(l)$, so we can
  compute lower bounding functions for all of them in time $O(l^2)$. By
  Lemma~\ref{leminner}, we can then compute lower bounding functions for each
  inner node $z \in V(T)$ in a bottom up manner. For each inner node $z$, a lower
  bounding function can computed in time $m^{6k} q(l)$ by
  Lemma~\ref{leminner}, where $q$ is a polynomial independent of $z$. The
  number of inner nodes of $T$ is in $O(l)$, so this requires $O(m^{6k} l\:
  q(l))$ time in total.~
\end{proof}}
\pflemfullalgorithm}
\begin{proposition}\label{propcountingwithdecomp}
  There is a polynomial $p$ and an algorithm $\mathbb{A}$ such that
  $\mathbb{A}$, given a CNF formula $F$ and a decomposition tree $(T,
  \delta)$ of $I(F)$, computes the number of satisfying total truth
  assignments of $F$ in time $m^{6k} p(l)$. Here, $m$ denotes
  the number of clauses of $F$, $l$ denotes the length of $F$, and $k
  = \mathit{index}(T, \delta)$.
\end{proposition}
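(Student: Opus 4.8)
The plan is to assemble the algorithm $\mathbb{A}$ from the bottom-up dynamic programming already established and then read off the answer at the root. First I would invoke Lemma~\ref{lemfullalgorithm} to compute a lower bounding function $l_z$ for every node $z \in V(T)$ in a single bottom-up pass; this is the computational core of $\mathbb{A}$, and it already runs within the target bound $m^{6k} p(l)$. All that then remains is to extract the count of satisfying assignments from the value of $l_z$ at the root.

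For the extraction, let $r$ be the root of $T$, so that $\var_r = \var(F)$ and $2^{\var_r}$ consists of exactly the total assignments. By Lemma~\ref{lememptyproper}, a total assignment satisfies $F$ precisely when it has shape $(\emptyset, \emptyset)$, whence $n_r((\emptyset, \emptyset))$ is the desired number; that same lemma also tells us $(\emptyset, \emptyset)$ is \emph{proper}. Since every proper shape is a restricted shape by Lemma~\ref{lemrestrictedproper}, we have $(\emptyset, \emptyset) \in \rshapes{r}$, so $l_r$ is defined on it, and by the defining property of a lower bounding function $l_r(s) = n_r(s)$ for proper $s$. Hence $l_r((\emptyset, \emptyset)) = n_r((\emptyset, \emptyset))$, and $\mathbb{A}$ outputs this value. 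The concluding lookup is negligible, so the total running time is dominated by the call to Lemma~\ref{lemfullalgorithm} and stays within $m^{6k} p(l)$.

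At this stage there is no real obstacle left: the genuinely hard parts --- bounding the number of restricted shapes (Corollary~\ref{correstrictedbounded}), showing that proper shapes are all captured as restricted shapes (Lemma~\ref{lemrestrictedproper}), and verifying that the recurrence of Lemma~\ref{lemdpequality} preserves the lower-bounding invariant --- have all been discharged. The single point demanding care is to output the value from a shape on which $l_r$ is guaranteed to \emph{equal} $n_r$ rather than merely bound it from below; this is exactly why the properness of $(\emptyset, \emptyset)$ must be invoked, and it is the reason the algorithm need only track lower bounding functions rather than the (possibly exponentially many) exact shape counts.
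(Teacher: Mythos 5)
Your proposal is correct and follows essentially the same route as the paper's own proof: apply Lemma~\ref{lemfullalgorithm} to obtain a lower bounding function at the root, then use Lemma~\ref{lememptyproper} (properness of $(\emptyset,\emptyset)$) to conclude $l_r((\emptyset,\emptyset)) = n_r((\emptyset,\emptyset))$ is the desired count. Your explicit appeal to Lemma~\ref{lemrestrictedproper} to confirm $(\emptyset,\emptyset) \in \rshapes{r}$ is a small detail the paper leaves implicit, but otherwise the arguments coincide.
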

\newcommand{\pfpropcountingwithdecomp}[0]{
\begin{proof}
  By Lemma~\ref{lemfullalgorithm} a lower bounding function $l_r$ for
  the root $r$ of $T$ can be computed in time $m^{6k}\: q(l)$,
  where $q$ is a polynomial independent of $F$. By
  Lemma~\ref{lememptyproper}, the value $n_r((\emptyset, \emptyset))$
  corresponds to the number of satisfying total truth assignments of
  $F$, and the shape $(\emptyset, \emptyset)$ is proper. Since $l_r$
  is a lower bounding function for $r$ it follows that
  $l_r((\emptyset, \emptyset)) = n_r((\emptyset, \emptyset))$.
\end{proof}}
\longversion{\pfpropcountingwithdecomp}
\shortversion{
  \begin{proof}[Sketch]
    We compute lower bounding functions for every node of $T$. It follows from
    Corollary~\ref{correstrictedbounded} that each such function can be
    represented in polynomial space for fixed $k$. Computing lower bounding
    functions for leaf nodes is straightforward. For an inner node $z$ with
    children $x$ and $y$, we proceed as follows. Assume that lower bounding
    functions $l_x$ and $l_y$ for $x$ and $y$ have already been computed. We
    first compute the set $\rshapes{z}$ and set $l_z(s_z) := 0$ for each $s_z
    \in \rshapes{z}$. We then run through all triples $(s_x, s_y, s_z)$ with
    $s_x \in \rshapes{x}, s_y \in \rshapes{y}, s_z \in \rshapes{z}$ and check
    whether $s_x$ and~$s_y$ generate $s_z$ (each check can be done in
    polynomial time). If that is the case, we set $l_z(s_z) := l_z(s_z) +
    l_x(s_x) l_y(s_y)$. By Lemma~\ref{lemdpequality}, the resulting $l_z$ will
    be a lower bounding function. There are at most $(m+1)^{6k}$ such triples
    for each inner node, so this can be done in time $m^{6k} p(l)$ for all
    nodes of $T$, where $p$ is a suitable polynomial independent of
    $F$. Having computed a lower bounding function $l_r$ for the root $r$ of
    $T$, we output $l_r((\emptyset, \emptyset))$, which corresponds to the
    number of satisfying total truth assignments of $F$.
  \end{proof}}
\begin{proof}[of Theorem~\ref{thmmain}]
  Let $\CCC$ be a graph class of bounded symmetric clique\hy width and $F$ a
  CNF formula of length $l$ with $m$ clauses such that $I(F) \in \CCC$. Let
  $k$ be an upper bound for the symmetric clique\hy width of any graph in
  $\CCC$. We compute a decomposition tree $(T, \delta)$ of $I(F)$ such that
  $\mathit{rankw}(T,\delta) = \mathit{rankw}(I(F))$ as follows. Initially, we
  set $k' := 1$. We then repeatedly run the algorithm of
  Theorem~\ref{thmrankdecomp} and increment $k'$ by one until we find a
  decomposition of rank\hy width~$k'$. This will be the case after at most $k$
  steps since $\mathit{rankw}(I(F)) \leq \mathit{scw}(I(F))$ by
  Corollary~\ref{corscwrankw}.  Since $\CCC$ is fixed, we can consider $k$
  (and every $k' \leq k$) a constant, so $(T,\delta)$ can be obtained in time
  $O(\Card{V(I(F))}^3)$ by Theorem~\ref{thmrankdecomp}. Because $2l$ is an upper bound on
  the number of vertices of $I(F)$, this is in $l^{O(1)}$ (assuming that $l
  \geq 2$). By Lemma~\ref{lemscwrankw}, $\mathit{index}(T, \delta) \leq
  2^{\mathit{rankw}(I(F))}$ and thus $\mathit{index}(T, \delta) \leq
  2^{\mathit{scw}(I(F))} \leq 2^k$. By
  Proposition~\ref{propcountingwithdecomp}, the number of satisfying total
  truth assignments of $F$ can be computed in time $m^{6\:\mathit{index}(T,
    \delta)} p(l)$ for some polynomial $p$ independent of $F$, that is, in
  time $m^{O(2^k)} p(l)$. Since $k$ is a constant, this is in $l^{O(1)}$, as
  is the total runtime.
\end{proof}
\section{Conclusion}
\begin{sloppypar}
  We have shown that \#SAT is polynomial\hy time tractable for classes of
  formulas with incidence graphs of bounded symmetric clique\hy width (or
  bounded clique\hy width, or bounded rank\hy width). It would be interesting
  to know whether this problem is tractable under even weaker structural
  restrictions. For instance, it is currently open whether \#SAT is
  polynomial\hy time tractable for classes of formulas of bounded $\beta$\hy
  hypertree width~\cite{GottlobPichler04} (if a corresponding decomposition is
  given).
\end{sloppypar}
\subsubsection*{Acknowledgements} The authors would like to thank an anonymous
referee for suggesting to state the main results in terms of symmetric
clique\hy width instead of Boolean\hy width.
\bibliographystyle{plain} 
\bibliography{literature} 

\end{document}